\DeclareDelimiter{\Otilde}[\widetilde{O}]{\lparen}{\rparen}
\DeclareDelimiter{\Var}[\operatorname{Var}]{\lbrack}{\rbrack}
\DeclareDocumentMathCommand{\DKL}{s}{\IfBooleanTF{#1}{\D*_{\operatorname{KL}}}{\D_{\operatorname{KL}}}}
\DeclareDocumentMathCommand{\bmu}{}{\bar{\mu}}
\DeclareDocumentMathCommand{\bnu}{}{\bar{\nu}}
\DeclareDocumentCommand{\funding}{}{
	Nima Anari and Thuy-Duong Vuong are supported by NSF CAREER Award CCF-2045354, a Sloan Research Fellowship, and a Google Faculty Research Award. Yang P. Liu was supported by the Department of Defense (DoD) through the National Defense Science and Engineering Graduate Fellowship, and NSF CAREER Award CCF-1844855 and NSF Grant CCF-1955039.
}
\DeclareDocumentCommand{\acks}{}{
	We thank Michal Derezi\'nski and Elizabeth Yang for useful discussions.
}
	\title{Optimal Sublinear Sampling of\\ Spanning Trees and Determinantal Point Processes via\\ Average-Case Entropic Independence}
	\author{Nima Anari}
	\author{Yang P.\ Liu}
	\author{Thuy-Duong Vuong}
	\affil{Stanford University, \url{{anari,yangpliu,tdvuong}@stanford.edu}}
	\title{Optimal Sublinear Sampling of Spanning Trees and Determinantal Point Processes via Average-Case Entropic Independence\thanks{\funding}}
	\author{
		\IEEEauthorblockN{Nima Anari}
		\IEEEauthorblockA{
			Stanford University\\
			anari@cs.stanford.edu
		}
		\and
		\IEEEauthorblockN{Yang P.\ Liu}
		\IEEEauthorblockA{
			Stanford University\\
			yangpliu@stanford.edu
		}
		\and
		\IEEEauthorblockN{Thuy-Duong Vuong}
		\IEEEauthorblockA{
			Stanford University\\
			tdvuong@stanford.edu
		}
	}
\date{}
\begin{document}
	\maketitle
	\begin{abstract}
       	We design fast algorithms for repeatedly sampling from strongly Rayleigh distributions, which include as special cases random spanning tree distributions and determinantal point processes. For a graph $G=(V, E)$, we show how to approximately sample uniformly random spanning trees from $G$ in $\widetilde{O}(\lvert V\rvert)$\footnote{Throughout, $\widetilde{O}(\cdot)$ hides polylogarithmic factors in $n$.} time per sample after an initial $\widetilde{O}(\lvert E\rvert)$ time preprocessing. This is the first nearly-linear runtime in the output size, which is clearly optimal. For a determinantal point process on $k$-sized subsets of a ground set of $n$ elements, defined via an $n\times n$ kernel matrix, we show how to approximately sample in $\widetilde{O}(k^\omega)$ time after an initial $\widetilde{O}(nk^{\omega-1})$ time preprocessing, where $\omega<2.372864$ is the matrix multiplication exponent. The time to compute just the weight of the output set is simply $\simeq k^\omega$, a natural barrier that suggests our runtime might be optimal for determinantal point processes as well. As a corollary, we even improve the state of the art for obtaining a single sample from a determinantal point process, from the prior runtime of $\widetilde{O}(\min\{nk^2, n^\omega\})$ to $\widetilde{O}(nk^{\omega-1})$.
       	
       	In our main technical result, we achieve the optimal limit on domain sparsification for strongly Rayleigh distributions. In domain sparsification, sampling from a distribution $\mu$ on $\binom{[n]}{k}$ is reduced to sampling from related distributions on $\binom{[t]}{k}$ for $t\ll n$. We show that for strongly Rayleigh distributions, the domain size can be reduced to nearly linear in the output size $t=\widetilde{O}(k)$, improving the state of the art from $t= \widetilde{O}(k^2)$ for general strongly Rayleigh distributions and the more specialized $t=\widetilde{O}(k^{1.5})$ for spanning tree distributions. Our reduction involves sampling from $\widetilde{O}(1)$ domain-sparsified distributions, all of which can be produced efficiently assuming approximate overestimates for marginals of $\mu$ are known and stored in a convenient data structure. Having access to marginals is the discrete analog of having access to the mean and covariance of a continuous distribution, or equivalently knowing ``isotropy'' for the distribution, the key behind optimal samplers in the continuous setting based on the famous Kannan-Lov\'asz-Simonovits (KLS) conjecture. We view our result as analogous in spirit to the KLS conjecture and its consequences for sampling, but rather for discrete strongly Rayleigh measures.
    \end{abstract}
    \Tag<ieeetran>{
    	\begin{IEEEkeywords}
    		domain sparsification, strongly Rayeligh, sublinear sampling, entropic independence, Markov chains
    	\end{IEEEkeywords}
    }
    \Tag{\clearpage}
    \section{Introduction}
\label{sec:intro}

\Tag{Efficiently sampling from probability distributions is a fundamental algorithmic question whose study has been instrumental in revealing connections between many areas of mathematics and computer science. Markov chains are perhaps the single most utilized method in designing sampling algorithms. The study of Markov chains is an active area of research in both high-dimensional continuous settings and combinatorial/discrete settings \cite[see, e.g.,][]{Jer98}. Unlike many other computational tasks, sampling is not in general ``efficiently verifiable.'' This motivates a sharp theoretical understanding of the mixing time of Markov chains, because there is no general technique for knowing when to stop running Markov chains in practice without an a priori theoretical bound.}

In this work, we study how far we can push the runtime of sampling algorithms for the widely used class of strongly Rayleigh distributions \cite{BBL09}, which are distributions supported on size $k$ subsets of a ground set $[n]=\set{1,\dots,n}$, denoted from here on as $\binom{[n]}{k}$, which satisfy strong forms of negative dependence (see \cref{prelim:sr} for a formal definition). Examples of strongly Rayleigh distributions include uniformly random spanning trees in a graph (where $n$ is the number of edges and $k+1$ is the number of vertices) and determinantal point processes.

Prior works \cite{Der19,DCV19,GKMV19,AD20,CDV20,ADVY21} discovered that under certain regularity assumptions on the distribution $\mu$, one can sample from $\mu$ in sublinear ($\ll n$) time. Regularity assumptions are needed to prevent a scenario where an element $i\in [n]$ has an extremely high marginal $\P_{S\sim \mu}{i\in S}$; it is impossible to find out which element has this property (and output it as part of the sample) without examining roughly all the $n$ elements. This is quite reminiscent of the problem of sampling from continuous log-concave densities on the Euclidean space, as was noted in prior works \cite{AD20}, where important directions in the space might be hard to find. The fastest algorithms for sampling from log-concave densities generally proceed by transforming the distribution into an ``isotropic form'' (a time-consuming part of the algorithm) which guarantees no particular direction accounts for a significant part of the variance, and proceed to obtain samples from isotropic log-concave densities \cite{LV18,Chen21,KL22}. The Kannan-Lov\'asz-Simonovits (KLS) conjecture was formulated to answer how fast one can sample from isotropic log-concave densities \cite{LV18}.

Motivated by the analogy with continuous distributions, \textcite{AD20} defined a notion of isotropy for discrete distributions $\mu$ on $\binom{[n]}{k}$: $\mu$ is isotropic when $\P_{S\sim \mu}{i\in S}$ is the same for all $i\in [n]$. A distribution can be \emph{put in approximate isotropic position} via preprocessing (see \cref{sec:isotropic-transform} for details). The main question then becomes
\begin{quote}
    How fast can we sample from (approximately) isotropic distributions $\mu$ on $\binom{[n]}{k}$?
\end{quote}
Prior works \cite{Der19,DCV19,AD20,CDV20} showed that the answer to this is $\leq \poly(k, \log n)$ for strongly Rayleigh distributions, assuming oracle access to $\mu$. However, the optimal sampling runtime remained open. Our main result in this work shows that the optimal runtime for sampling from isotropic strongly Rayleigh distributions on $\binom{[n]}{k}$ is, roughly speaking, at most the runtime for sampling from related distributions on $\binom{[t]}{k}$ for $t=\Otilde{k}$. In other words, isotropy allows us to pretend that $n$ is only as large as $\Otilde{k}$.
\begin{theorem}[Informal, see \cref{thm:overestimate} for a formal statement]
    Suppose that the time to sample from a class of strongly Rayleigh distributions on $\binom{[n]}{k}$ is $\mathcal{T}(n, k)$. Then we can sample from (approximately) isotropic distributions in this family in time $\Otilde{1}\cdot \mathcal{T}(\Otilde{k}, k)$. 
\end{theorem}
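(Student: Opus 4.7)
The plan is a sparsification-and-rejection procedure: reduce one sample from $\mu$ on $\binom{[n]}{k}$ to a single call of the assumed sub-routine on a derived strongly Rayleigh distribution whose support lies on a random size-$\Otilde{k}$ subset $T \subseteq [n]$, and then correct for distributional bias by a rejection step that, as I will argue, succeeds with probability $\Omega(1/\operatorname{polylog})$.

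The first step is to use the overestimates $\lambda_1,\dots,\lambda_n$ of the marginals---under approximate isotropy these satisfy $\lambda_i = \Otilde{k/n}$ and $\sum_i \lambda_i = \Otilde{k}$---to draw $T$ by including each $i \in [n]$ independently with probability $p_i := \min\{c\lambda_i\log n,\,1\}$ for a suitable constant $c$. Then $|T| = \Otilde{k}$ with high probability. The second step defines the sparsified measure $\nu_T$ on $\binom{T}{k}$ as the external-field tilt $\nu_T(S) \propto \mu(S)\,\mathbf{1}[S\subseteq T]\,\prod_{i\in S} 1/p_i$. Since strong Rayleighness is preserved under both external fields and conditioning on a down-set, $\nu_T$ is strongly Rayleigh, and a single call to the assumed black-box sampler returns $S \sim \nu_T$ in time $\mathcal{T}(|T|,k) = \mathcal{T}(\Otilde{k},k)$. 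A short calculation of the joint law of $(T,S)$ shows that if one accepts this sample with probability proportional to the partition function $Z(T) := \sum_{A\subseteq T,\,|A|=k}\mu(A)\prod_{i\in A} 1/p_i$, then the marginal of $S$ conditional on acceptance is exactly $\mu$: the Bernoulli factors from the law of $T$ cancel the external-field tilt upon summing over $T\supseteq A$.

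The main obstacle---and where the $\Otilde{k}$ target (versus the previously known $\Otilde{k^2}$) must be earned---is bounding $\mathbb{E}_T[Z(T)]/Z_{\max}$ from below by $1/\operatorname{polylog}$. I would argue this by an \emph{average-case entropic-independence} argument: reveal the coordinates of $T$ one at a time, and bound the incremental KL divergence contributed by each revealed coordinate using the fact that, averaged over \emph{which} coordinate is revealed, the local pairwise correlations of a strongly Rayleigh $\mu$ are controlled by $O(1/k)$ (the constant obtained in prior work from the log-concavity of the generating polynomial). Worst-case control over single coordinates only yields the $\Otilde{k^2}$-bound; averaging buys the extra factor of $k$. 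Telescoping over all $n$ coordinates yields a total KL divergence of order $\sum_i p_i / k = \Otilde{1}$ between the law produced by the procedure and $\mu$, which by a standard $\DKL \ge -\log(\text{acceptance rate})$-type identity translates to the desired $\Omega(1/\operatorname{polylog})$ acceptance probability. Hence $\Otilde{1}$ trials suffice, each invoking the sub-routine once on a $\Otilde{k}$-sized ground set, giving the claimed $\Otilde{1}\cdot\mathcal{T}(\Otilde{k},k)$ runtime.
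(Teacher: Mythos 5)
There is a genuine gap at the rejection step, and it is precisely the barrier this paper is built to circumvent. Your scheme accepts the sparsified sample with probability proportional to the partition function $Z(T)=\sum_{A\subseteq T,\,|A|=k}\mu(A)\prod_{i\in A}p_i^{-1}$, so for the rejection to be exact you must normalize by (an upper bound on) $\max_T Z(T)$, and the acceptance rate is governed by how well $Z(T)$ concentrates. For $|T|=\widetilde{O}(k)$ it does not: take $[n]$ partitioned into blocks $U_1,\dots,U_k$ of size $n/k$ and $\mu$ uniform over transversals (one element per block); this $\mu$ is isotropic, and with your Bernoulli inclusion each $|T\cap U_i|$ is essentially Poisson with mean $c=\mathrm{polylog}(n)$, so $\log Z(T)=\sum_i\log|T\cap U_i|+\mathrm{const}$ has variance $\Theta(k/c)$. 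Hence the typical value of $Z(T)$ is a factor $e^{-\Omega(k/\mathrm{polylog}\,n)}$ below its maximum (and even below its mean), and the acceptance probability is superpolynomially small, not $\Omega(1/\mathrm{polylog})$. The concluding inference is also reversed: a bound $D_{\mathrm{KL}}=\widetilde{O}(1)$ between the unrejected output law and $\mu$ does not lower-bound the acceptance rate of an exact rejection corrector---that requires a pointwise bound on the likelihood ratio, which small KL divergence does not supply. Concentration of $Z(T)$ only begins around $|T|\simeq k^{1.5}$, which is exactly why the prior partition-function-based analyses stall there.

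The fix, and what the paper actually does, is to never correct via the partition function at all: the sparsification is iterated as a Markov chain (the down-up walk on the complement distribution, \cref{alg:random-walk}), each step drawing a fresh $T_i\supseteq S_i$ of size $O(K)$ and resampling $S_{i+1}\sim\mu_{T_i}$. A single $T$ is a poor sparsifier for counting, but over $\widetilde{O}(1)$ rounds it is a good one for sampling. The mixing time is bounded by showing (i) isotropy boosts the entropic independence of the complement distribution by a $\log(n/k)$ factor (\cref{thm:one level contraction}), (ii) marginals---not partition functions---of random restrictions concentrate all the way down to $t=\widetilde{O}(k)$ via a martingale argument (\cref{thm:concentration}), and (iii) an average-case local-to-global theorem (\cref{thm:entropy contraction}) converts these into an entropy contraction of $1-1/\mathrm{polylog}(n)$ per step, whence $O(\log^3 n)$ steps suffice. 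Your instinct that an average-case entropic-independence argument is the crux is correct, but it must be deployed to control the mixing of the iterated walk, not to salvage a one-shot rejection scheme.
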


\Tag{
\begin{remark}
Our progress is analogous to the history of sampling algorithms for continuous distributions and the role of (continuous) isotropy \cite[see, e.g.,][]{LV18}. Transforming a convex body or a log-concave density into isotropic position (defined as having covariance matrix $\simeq I$ instead of uniform marginals) is the standard preprocessing step, and the main challenge has been establishing properties of isotropic distributions that would then yield optimal bounds on mixing time of standard off-the-shelf Markov chains. A notable conjectured property of isotropic log-concave distributions is the KLS conjecture, which was recently nearly resolved \cite{Chen21,KL22}. We view our result as an analog, at least in spirit, of the KLS conjecture for discrete distributions: we establish optimal mixing time bounds (analogous to  consequences of the KLS conjecture) for strongly Rayleigh (analogous to log-concave) discrete distributions in discrete-isotropic position (a natural analog of the continuous isotropic position). Interestingly, our proof technique also shows some resemblance to the common framework used in recent advances on the KLS conjecture \cite{LV18,Chen21,KL22}: a key technical result we prove is that isotropy is approximately preserved with high probability under a natural localization process (see \cref{sec:concentration}). \Textcite{CE22} observed recently that several localization processes used for continuous and discrete sampling problems can be, at least partially, unified under a single umbrella. We believe our results provide further justification for this unification.
\end{remark}
}

\Tag{
In many applications of sampling, one needs not just one, but rather many independent samples from a distribution. A fundamental observation is that the amortized time of producing many samples can often be much less than the cost of producing one sample. As an example, consider the task of producing samples from a distribution on $n$ points given explicitly by $n$ numbers $p_1,\dots,p_n\geq 0$ that sum to $1$. The time to produce a single sample from this distribution is $\simeq n$, as one needs to look at all $p_i$. However, after reading through the whole input, it is easy to construct a data structure (such as a simple array of prefix sums) that allows subsequent samples to be obtained in $\Otilde{1}$ time. Obtaining similar economies of scale for distributions supported on exponentially-sized state spaces is not possible with this na\"{i}ve approach; instead, our results show how to obtain optimal economies of scale by a different method that preprocesses a strongly Rayleigh distribution and puts it into isotropic form.

We remark that in some natural scenarios, a preprocessing step might not be needed at all, and we can enjoy fast runtimes even for the first sample. For example, if the distribution is symmetric w.r.t.\ the ground set, see, e.g., \cite{OR18} for examples of determinantal point processes on symmetric spaces, the distribution is automatically in isotropic form. Similarly, for random spanning trees in graphs, under mild expansion assumptions (roughly speaking, expanding mildly better than 2-dimensional grids) \cite{AALO18}, no edge will have a large marginal and the distribution is automatically in approximately isotropic form.} Below we expand on two classes of distributions that constitute the main applications of our result.

\paragraph{Random spanning trees.} Random spanning trees of a graph $G=(V, E)$ have found many applications in theoretical computer science. In approximation algorithms for the Traveling Salesperson Problem (TSP) \cite{GW17} they are a key component of the Best-of-Many Christofides algorithm used in recent TSP improvements \cite{KKO21}. Random spanning trees have found applications in the construction of graph sparsifiers \cite{GRV09, KS18}. As another example application, the recent breakthrough of \textcite{KKGZ21} on the $k$-edge connected multi-subgraph problem uses $\Theta(k)$ independent samples of random spanning trees, which demonstrates how economies of scale for sampling can lead to faster algorithms for some natural problems. The distribution of random spanning trees is also deeply connected with spectral graph theory and Laplacians of graphs, e.g., through the matrix-tree theorem. This has all motivated a long sequence of works on obtaining fast algorithms for sampling from this ubiquitous distribution \cite{Aldous90,Broder89,Wil96,CMN96,KM09,MST14,DKPRS17,DPPR17,Sch18,ALOVV21}. Many of these works have used random spanning trees as a testing ground for novel algorithm design techniques, in particular techniques originating in the study of Laplacian solvers, and more recently high-dimensional expanders. The latest works on sampling from spanning trees \cite{Sch18,ALOVV21} obtained, using two very different approaches, nearly-linear time sampling algorithms. In this work we show how to push even further and get optimal sublinear sampling algorithms with runtime $\Otilde{\card{V}}$, after an $\Otilde{\card{E}}$ preprocessing step.

\paragraph{Determinantal point processes.} Another important example of strongly Rayleigh distributions is a $k$-sized determinantal point process, or $k$-DPP for short. A $k$-DPP $\mu$ is a distribution on $\binom{[n]}{k}$ defined with the help of an $n\times n$ positive semidefinite matrix $L$, where probabilities are given by $k\times k$ principal minors:
\[ \mu(S)\propto \det(L_{S, S}). \]
DPPs have found many applications in machine learning, recommender systems, and randomized linear algebra \cite[see, e.g.,][]{DR10,KT12,DM21}. In most applications of $k$-DPPs, the size of a sample is small compared to the domain $[n]$, i.e., $k\ll n$, and the primary goal of sampling algorithms is to minimize the runtime's dependence on $n$. A nearly-linear dependence on $n$ can be achieved for example via Markov chains \cite{AOR16,HS19}. Recent works have shown how to go even further, and obtain after a preprocessing step, $\poly(k, \log n)$ sampling times \cite{DWH18,DWH19,DCV19,GKMV19,Der19,AD20,CDV20}; however, the dependence on $k$ remained suboptimal. We push the runtime to what we believe is the natural barrier for this sampling problem, and obtain a sampling algorithm with runtime $\Otilde{k^\omega}$, where $\omega$ is the matrix multiplication exponent. Note that $k^\omega$ is the time needed to just compute $\mu(S)$ for one set $S$, which is a natural barrier and suggests our result might be optimal.

We further show that the preprocessing step for DPPs can be implemented in time $\Otilde{nk^{\omega-1}}$. This, surprisingly, leads to an improvement for obtaining even a single sample from DPPs. The best prior algorithms were either based on MCMC and had a runtime of $\Otilde{nk^2}$ \cite{AOR16,HS19} or were based on linear algebraic primitives \cite{KT12,Poul20}, which implemented with fast matrix multiplication, would take time $\Otilde{n^\omega}$, see \cref{lem:dpp-matmult}. We remark that our improvement from $\Otilde{\min\set{nk^2,n^\omega}}$ to $\Otilde{nk^{\omega-1}}$ is only made possible by a fast preprocessing step which crucially is implemented by bootstrapping with the primitive of fast sampling from isotropic distributions.

\subsection{Sampling algorithm}

To obtain our optimal sublinear samplers, we use the framework established in prior works \cite{DKPRS17,DCV19,Der19,AD20,CDV20,ADVY21} of sparsifying the domain $[n]$ for isotropic distributions, i.e., distributions with roughly balanced marginals \cite{AD20}. The preprocessing step for our algorithm consists only of putting the distribution into (approximately) isotropic position (see \cref{sec:isotropic-transform}) by finding approximate overestimates for the marginals $\P_{S\sim \mu}{i\in S}$ and transforming $\mu$ by splitting elements with large marginals. One of our novel contributions is the design of new schemes for bootstrapping very fast (and likely optimal) preprocessing steps.

For our main contribution, we obtain an optimal nearly-linear-in-$k$ domain sparsification for isotropic strongly Rayeligh distributions. In domain sparsification, we reduce the task of sampling from our distribution on $\binom{[n]}{k}$ to distributions on $\binom{[t]}{k}$; we show this can be done with $t=\Otilde{k}$. Prior works on this problem either required $t\simeq k^2$ \cite{DCV19,Der19,AD20,CDV20,ADVY21} or for the specific case of spanning tree distributions required $t\simeq k^{1.5}$ \cite{DKPRS17}.

More formally, for an approximately isotropic $\mu$, we generate a sample by starting from some set $S_0\in \binom{[n]}{k}$ and following the random walk defined by \cref{alg:random-walk} for $\Otilde{1}$ steps. We output $S_{\Otilde{1}}$ as our approximate sample from $\mu$. Note that this random walk has an easy step (choosing $T_i$ uniformly at random from supersets of $S_i$) and a challenging step (choosing $S_{i+1}$ from subsets of $T_i$ with law induced by $\mu$). The challenging step is an instance of a similar sampling problem but with a smaller domain size $t$, so we can use a problem-specific baseline sampling algorithm.

\begin{Algorithm}
\caption{Down-up walk on the complement distribution\label{alg:random-walk}}
\For{$i=0,1,2,\dots$}{
From all $t$-sized supersets of $S_i$, select one uniformly at random and name it $T_i$.\;
Select among $k$-sized subsets of $T_i$ a random set $S_{i+1}$ with $\P{S_{i+1}}\propto\mu(S_i)$.
}
\end{Algorithm}

\Tag{We remark that the sparsification algorithm (\cref{alg:random-walk}) is not new and very similar variants of it have been used by almost all mentioned prior works. However, our analysis of \cref{alg:random-walk} is entirely different. A departure from prior methods of analysis is not for convenience, but rather necessary. Domain sparsification looks fundamentally different below $t\simeq k^{1.5}$. All prior works used in some shape or form the fact that the partition function of $T_i$, i.e. $\sum_{S\subseteq T_i} \mu(S)$ concentrates for a random $T_i$. Indeed, \textcite{DKPRS17} used this to design algorithms for not just sampling, but also counting spanning trees. Below the threshold of $t\simeq k^{1.5}$, the partition function no longer concentrates (see \cref{sec:comparison}). Surprisingly, we still show that while $T_i$'s are not good representatives of the ground set $[n]$ for partition functions or counting purposes, they still are good sparsifiers for sampling.}

\subsection{Our results}
To formally state our results on sampling from strongly Rayleigh distributions, it is useful to define $\mathcal{T}_{\mu}(t, k)$ for a distribution $\mu \in \R^{\binom{[n]}{k}}$ as the time it takes to produce a sample from $\mu$ conditional on all elements of the sample being a subset of a fixed set $T$ of size $\card{T} = t$. We use $\Otilde{\cdot}$ to suppress $\poly\log n$ factors. Notice below that the sum of marginals $\sum_i \P_{S\sim \mu}{i\in S}$ is always equal to $k$ for a distribution supported on $\binom{[n]}{k}$.

\begin{theorem}[Sampling using marginal overestimates]
\label{thm:overestimate}
Given a strongly Rayleigh distribution $\mu \in \R^{\binom{[n]}{k}}$ and marginal overestimates $q_i \ge \P_{T \sim \mu}{i \in T}$ for $i \in [n]$ which sum to $K := \sum_{i\in[n]} q_i,$ there is an algorithm that produces a sample from a distribution with total variation distance $n^{-O(1)}$ from $\mu$ in time bounded by $\Otilde{1}$ calls to $\mathcal{T}_{\mu}(O(K), k).$
\end{theorem}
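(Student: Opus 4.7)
The plan is to apply the domain sparsification walk of Algorithm~\ref{alg:random-walk} after putting $\mu$ into approximate isotropic form using the overestimates $q_i$, and then to appeal to the paper's main mixing bound for the down-up walk on isotropic strongly Rayleigh distributions.

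First I would invoke the isotropic transformation of \cref{sec:isotropic-transform}: split each element $i \in [n]$ into roughly $O(q_i) + 1$ copies to obtain a strongly Rayleigh distribution $\bmu$ on an extended ground set $[N]$ whose marginals are uniformly bounded by a small constant, giving approximate isotropy. The splitting construction preserves the strong Rayleigh property, and crucially, sampling from $\bmu$ conditioned on being a subset of any $T \subseteq [N]$ of size $O(K)$ can be implemented by a single call to $\mathcal{T}_\mu(O(K), k)$: the copies in $T$ determine a parent set $\pi(T) \subseteq [n]$ of size at most $|T| = O(K)$, and drawing from $\bmu \mid T$ reduces to drawing $S \sim \mu$ restricted to $\pi(T)$ and then assigning each chosen $i \in S$ a uniformly random copy lying in $T$.

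Next I would run Algorithm~\ref{alg:random-walk} on $\bmu$ with sparsification parameter $t = O(K)$, starting from an arbitrary $S_0 \in \binom{[N]}{k}$. In each iteration, step (i) picks $T_i \supseteq S_i$ of size $t$ uniformly at random, which is inexpensive, and step (ii) samples $S_{i+1}$ from $\bmu$ restricted to $T_i$, which by the reduction above costs one call to $\mathcal{T}_\mu(O(K), k)$. Invoking the paper's main mixing time theorem for the down-up walk on approximately isotropic strongly Rayleigh distributions yields that after $\Otilde{1}$ steps the law of $S_{\Otilde{1}}$ lies within total variation distance $n^{-O(1)}$ of $\bmu$; collapsing copies via $\pi$ then yields a sample with TV distance $n^{-O(1)}$ from $\mu$, at total cost $\Otilde{1}\cdot \mathcal{T}_\mu(O(K), k)$.

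The main obstacle is justifying the $\Otilde{1}$ mixing bound used at the end. Standard spectral or entropic independence tools for strongly Rayleigh distributions give only $\poly(k)$ mixing, so the real work lies in the average-case entropic independence framework developed in this paper: one must show that approximate isotropy of $\bmu$ is preserved, with high probability, under the localization induced by the uniform draw of $T_i$ in Algorithm~\ref{alg:random-walk}, enabling a recursive argument that drives the mixing time down to $\Otilde{1}$. This is the discrete analog, in spirit, of the KLS-type arguments that yield optimal mixing in the continuous isotropic setting, and is precisely what sharpens the prior sparsification thresholds of $t = \Otilde{k^2}$ (resp.\ $\Otilde{k^{1.5}}$ for spanning trees) down to the optimal $t = \Otilde{k}$.
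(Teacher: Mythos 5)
Your proposal follows essentially the same route as the paper: apply the isotropic transformation of \cref{def:isotropic-transformation} using the overestimates $q_i$, run the down-up walk of \cref{alg:random-walk} on the complement with sparsification parameter $t = O(K)$, and invoke the entropy-contraction bound of \cref{thm:main technical} (together with \cref{lem:entropy-contraction-implies-mlsi}) to get $\Otilde{1}$ mixing, with each up-step implemented by one call to $\mathcal{T}_\mu(O(K),k)$. The only imprecision is cosmetic: after subdivision the marginals are bounded by $O(K/\lvert U\rvert)$ rather than by a "small constant," and you reuse the symbol $\bmu$ for the subdivided distribution while the paper reserves it for the complement, but neither affects the argument.
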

We prove \cref{thm:overestimate} using a local-to-global argument, which requires us to also show that random conditionals of $\mu$ are isotropic with high probability. This is similar in spirit to the recent analyses of the KLS conjecture using stochastic localization \cite{Chen21,KL22} which show that an isotropic continuous distribution remains approximately isotropic over a stochastic evolution.
\begin{theorem}[Informal, see \cref{thm:concentration} for a formal statement]
\label{thm:informalconcentration}
Let $\mu \in \R^{\binom{[n]}{k}}$ be an isotropic strongly Rayleigh distribution. For $T \subseteq [n]$ and $S \in \binom{T}{k}$, let $\mu_T(S) := \mu(S)/\sum_{S \in \binom{T}{k}} \mu(S)$. Then with high probability over $T \in \binom{[n]}{t}$ for $t = \Otilde{k}$, $\mu_T$ is approximately isotropic.
\end{theorem}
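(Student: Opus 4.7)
The goal of \cref{thm:informalconcentration} is that with high probability over $T \in \binom{[n]}{t}$, every marginal $p_i(T) := \P_{S \sim \mu_T}{i \in S}$ for $i \in T$ is $O(k/t)$---matching the average value $k/t$ forced by $\sum_{i \in T} p_i(T) = k$ up to a constant. By a union bound over $i \in [n]$, it suffices to show one-sided concentration for each fixed $i$: $\P_T{i \in T \text{ and } p_i(T) > C k / t} \leq n^{-\Omega(1)}$. Isotropy of $\mu$ together with symmetry of the distribution of $T$ give $\mathbb{E}_T[p_i(T) \mathbf{1}[i \in T]] = k/n$, and hence $\mathbb{E}_T[p_i(T) \mid i \in T] = k/t$, so the correct mean is automatic and only concentration remains.

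The plan is a stochastic-localization-style martingale on a sequential reveal of $T$: set $T_0 = [n]$ and, at step $j = 1, \ldots, n-t$, remove a uniformly random element of $T_{j-1}$, ending at $T = T_{n-t}$. Tracking $M_j := p_i(T_j) \cdot \mathbf{1}[i \in T_j]$, the key algebraic input---which uses the SR pairwise-negative-correlation inequality $\P_{\mu_T}{i \in S,\, r \in S} \leq p_i(T) p_r(T)$---is the identity
\[ \sum_{r \in T,\, r \neq i} \P_{\mu_T}{i \in S,\, r \notin S} = p_i(T) \cdot (|T| - k), \]
which, combined with a uniform bound $\max_r p_r(T_j) \leq C k / |T_j|$ (``approximate isotropy at the current scale''), shows that $(M_j)$ is an approximate martingale with per-step conditional drift and squared increments of order at most $(C k / |T_j|)^2$. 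Applying a Freedman-type concentration bound to the accumulated quadratic variation across the $n - t$ steps then yields $\P_T{p_i(T) > C k / t} \leq n^{-\Omega(1)}$ once $t = \Otilde{k}$, with the polylogarithmic gap arising from Freedman's logarithmic dependence on the tail probability.

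The main obstacle is the circular flavor of this argument: the per-step increment bound uses $\max_r p_r(T_j) \leq C k / |T_j|$, which is precisely the approximate isotropy we aim to establish. To break the circularity, I would run the down-walk in $\Otilde{1}$ phases of geometrically decreasing size---$n \to n/2 \to n/4 \to \cdots \to \Otilde{k}$---and induct: the conclusion at the end of one phase furnishes the isotropy hypothesis required to bound increments in the next, with the constant $C$ allowed to degrade slightly each phase. A union bound over the at most $\Otilde{1}$ phase-failure events (still $n^{-\Omega(1)}$ in total) closes the induction at the target scale $t = \Otilde{k}$. This staged bootstrapping is structurally analogous to the dyadic stochastic-localization schemes underlying recent KLS analyses, matching the analogy the authors emphasize.
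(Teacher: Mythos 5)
Your overall strategy is the one the paper actually uses to prove \cref{thm:concentration}: reveal $T$ by deleting one uniformly random element at a time, track the marginal of a fixed coordinate $i$ along this process (\cref{def:process}), control the drift via the row-sum identity for covariances of a homogeneous distribution (\cref{lemma:rowsum}, equivalent to your identity for $\sum_{r} \P_{F\sim\mu_T}{i\in F,\, r\notin F}$) combined with pairwise negative correlation of strongly Rayleigh measures, apply a Freedman/Bernstein martingale inequality (\cref{thm:bernstein}), and union bound over coordinates. However, two steps of your sketch do not close as written.

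First, the quantitative accounting. The correct per-step bounds (\cref{lemma:drift,lemma:maxchange}) are multiplicative in $p_i$: the conditional drift is about $p_i(T_j)/\card{T_j}$, i.e.\ a $(1+1/\card{T_j})$ factor per step, which compounds over the walk to exactly the factor $n/t$ by which the marginal is allowed to grow; and the conditional second moment is at most (max increment) times (drift), roughly $p^{\max}p_i^2/\card{T_j}\approx k^3/\card{T_j}^4$. The extra factor $p^{\max}/\card{T_j}$ relative to $p_i^2$ is what makes the accumulated quadratic variation $\sum_m k^3/m^4 \approx k^3/t^3$ small compared to the square $(k/t)^2$ of the target value once $t\gtrsim k\log n$; this is where the threshold $t=\Otilde{k}$ comes from. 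Your stated bounds of order $(Ck/\card{T_j})^2$ for both the drift and the squared increments are too weak by a factor of roughly $k$: summing them over the walk gives total drift $\approx k^2/t$ and total variance $\approx k^2/t$, both of which swamp the target $k/t$, so Freedman yields nothing. Relatedly, because drift and fluctuations are both proportional to $p_i$, the paper runs the martingale on $\log p(\mu^{(t)})_i$ rather than on the marginal itself, which cleanly separates the deterministic $\log(n/t)$ growth from an $O(1)$ residual.

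Second, the circularity. Your dyadic phasing does not break it: to bound the increments at an intermediate time inside a phase you need the max-marginal bound at that time, not merely at the start of the phase, and since marginals only increase under restriction there is no a priori bound carrying you across a phase; the same circularity recurs at every scale. The paper's resolution is to stop the process the first time the condition $p(\mu^{(t)})^{\max} \le 2p(\mu)^{\max} n/(n-t)$ fails; the stopped process satisfies the increment bounds unconditionally, Freedman then controls it at each fixed time, and a union bound over all $n-s$ times and all $n$ coordinates shows the condition in fact never fails with high probability, so the stopped and original processes coincide. This stopping-time argument subsumes the phasing entirely.
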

Our input distributions may not be isotropic, so we also design an efficient preprocessing step to obtain marginal estimates to transform $\mu$ into an isotropic distribution.
\begin{theorem}[Informal, see \cref{lemma:marginalestimate} for a formal statement]
\label{thm:marginalinformal}
Given access to a strongly Rayleigh distribution $\mu \in \R^{\binom{[n]}{k}}$, we can obtain overestimates of the marginals $\P_{T \sim \mu}{i \in T}$ summing to $O(k)$ in time proportional to $\Otilde{n/k}$ calls to a sampler for isotropic distributions on sets of size $\Otilde{k}$.
\end{theorem}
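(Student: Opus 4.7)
The plan is to bootstrap: maintain marginal overestimates $q_i^{(j)}$ and use them, via \cref{thm:overestimate}, to draw samples from $\mu$ whose empirical frequencies will sharpen the overestimates. Start with the trivial choice $q_i^{(0)} = 1$, so $K_0 := \sum_i q_i^{(0)} = n$. At stage $j$, invoke \cref{thm:overestimate} with the current overestimates $q^{(j)}$ to draw $m_j = \Otilde{n / K_j}$ independent approximate samples from $\mu$, form empirical marginal estimates $\hat p_i^{(j)}$, and set
$q_i^{(j+1)} = \max\bigl( 2\hat p_i^{(j)},\, K_j/(5n) \bigr)$.
Iterate until $K_j = O(k)$.

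First I would verify correctness. By a Chernoff bound, for every $i$ with true marginal $p_i \ge K_j/(5n)$ one gets $\hat p_i^{(j)} \in [p_i/2, 3p_i/2]$ with high probability, so $2\hat p_i^{(j)} \ge p_i$; for $p_i < K_j/(5n)$ the threshold itself is a valid overestimate. Hence $q^{(j+1)}$ is a valid overestimate at every stage. Summing yields $K_{j+1} \le 2\sum_i \hat p_i^{(j)} + n \cdot K_j/(5n) \le 2k + K_j/5$, a contraction from $K_j$ toward its fixed point $O(k)$ in $J = O(\log(n/k))$ rounds.

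Next I would account for the cost in oracle calls. By \cref{thm:overestimate}, each of the $m_j$ samples at stage $j$ takes $\Otilde{1}$ calls to $\mathcal{T}_\mu(O(K_j), k)$, i.e., sampling $\mu$ conditioned on being contained in a set $T \subseteq [n]$ of size $O(K_j)$. Since this $T$ is drawn (roughly) uniformly inside \cref{alg:random-walk}, \cref{thm:informalconcentration} ensures that $\mu_T$ is approximately isotropic on $T$ with marginals $\approx k/K_j$. Hence I would apply \cref{thm:overestimate} a second time to $\mu_T$ with the uniform overestimates $q'_i = O(k/K_j)$ summing to $O(k)$, reducing each outer call to $\Otilde{1}$ calls of $\mathcal{T}_{\mu_T}(\Otilde{k}, k)$. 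Each such inner call, by another invocation of \cref{thm:informalconcentration} applied to the already isotropic $\mu_T$, is a sampling task on an approximately isotropic strongly Rayleigh distribution on $\binom{T'}{k}$ with $\lvert T' \rvert = \Otilde{k}$, which is one call to the assumed oracle. Summing over stages, the total is $\sum_{j=0}^{J-1} m_j \cdot \Otilde{1} = \Otilde{\sum_j n/K_j} = \Otilde{n/K_J} = \Otilde{n/k}$, since the $K_j$ form a geometric sequence whose last term dominates.

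The main obstacle I anticipate is the distributional mismatch in \cref{thm:informalconcentration}: it is stated for $T$ drawn uniformly from $\binom{[n]}{t}$, while the $T$'s that actually appear inside the recursive invocations of \cref{thm:overestimate} are distributed as uniform supersets of a random $S \sim \mu$ (at stationarity of \cref{alg:random-walk}), not as uniform $t$-subsets of $[n]$. I would either need a mild extension of the concentration statement to this $\mu$-tilted law---plausible since the two laws differ only by a conditioning on a $k$-element seed---or a direct coupling argument showing that the Radon--Nikodym ratio between them is bounded enough for the ``approximately isotropic with high probability'' conclusion to transfer. A secondary but routine concern is the propagation of the $n^{-O(1)}$ TV error of \cref{thm:overestimate} across the $O(\log(n/k))$ stages and two levels of recursion, which would need to be tracked carefully to ensure the empirical estimates remain sharp enough to give valid overestimates with the claimed failure probability.
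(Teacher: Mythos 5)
There is a genuine gap in the cost accounting, and the patch you propose to fix it does not work. In your early rounds you hold overestimates summing to $K_j \gg k$, so each sample drawn via \cref{thm:overestimate} requires a call to $\mathcal{T}_\mu(O(K_j),k)$, i.e., a baseline sampling problem on a domain of size $\Theta(K_j)$, not $\Otilde{k}$. (For $k$-DPPs the very first round, with $K_0=n$, already costs $\Otilde{n^\omega}$, which is exactly the runtime the preprocessing step is supposed to beat.) Your proposed fix is to apply \cref{thm:overestimate} a second time to $\mu_T$, claiming \cref{thm:informalconcentration} makes $\mu_T$ ``approximately isotropic with marginals $\approx k/K_j$.'' This is not what the concentration theorem gives. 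After the isotropic transformation with overestimates summing to $K_j$, the max marginal of $\mu'$ is only bounded by $O(K_j/n)$, and \cref{thm:concentration} then yields $p(\mu_T)^{\max} \le 2p(\mu')^{\max} n/\card{T} = O(K_j/\card{T})$. The resulting uniform overestimates for $\mu_T$ sum to $O(K_j)$, not $O(k)$, so a second invocation of \cref{thm:overestimate} cannot sparsify below $O(K_j)$ --- this is the information-theoretic limit of domain sparsification given overestimates summing to $K_j$. Obtaining overestimates for $\mu_T$ summing to $O(k)$ is precisely the problem you are trying to solve, so the inner recursion is circular and never bottoms out at $\Otilde{k}$-sized domains while $K_j\gg k$. (Your outer steps --- the Chernoff argument for validity of $2\hat p_i^{(j)}$, the floor $K_j/(5n)$, and the contraction $K_{j+1}\le 2k+K_j/5$ --- are fine; the problem is only that the samples feeding them are too expensive.)

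The paper avoids this by recursing on the \emph{ground set} rather than on the quality of the overestimates. It splits $S=S_1\sqcup S_2$, recursively obtains overestimates for the restrictions $\mu_{S_1}$ and $\mu_{S_2}$ each summing to at most $4k$ (base case $\card{S}\le 4k$ with all overestimates $1$), and uses negative association of strongly Rayleigh measures --- marginals only increase under restriction --- to conclude that the concatenated vector overestimates the marginals of $\mu_S$ with sum at most $8k$. Thus \emph{every} sampling call in the entire recursion already has overestimates summing to $O(k)$ and hence is a call on a domain of size $O(k)$; taking $O(\card{S}\log n/k)$ samples of $\mu_S$ and doubling empirical frequencies (with a floor of $k/\card{S}$) restores the sum to $\le 4k$ before merging upward. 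Summing over the $O(\log n)$ levels gives $\Otilde{n/k}$ calls, each on an $\Otilde{k}$-sized domain. If you want to salvage an iterative scheme like yours, you would need some independent mechanism to make the early, large-$K_j$ rounds cheap; the divide-and-conquer over the ground set is the mechanism the paper supplies.
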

We remark that the preprocessing time of $\Otilde{\card{E}}$ for estimating marginals of a random spanning tree can be alternatively achieved by estimating effective resistances of the graph using Laplacian solvers and the Johnson-Lindenstraus lemma \cite{ST04,SS11}. However, we give a self-contained method by bootstrapping the sampling algorithm (\cref{thm:marginalinformal}) that covers not only random spanning trees, but also $k$-DPPs.

We can apply these results along with known algorithms that sample a random spanning tree in $\Otilde{\card{E}}$ time \cite{ALOVV21}, or a $k$-DPP on $n$ elements in $\Otilde{n^\omega}$ time (see \cref{lem:dpp-matmult}) to achieve faster runtimes for sampling from these distributions. We note that our algorithm for sampling a $k$-DPP is faster than previously known runtimes, even in the case of generating a single sample.
\begin{corollary}[Sampling spanning trees]
\label{cor:spanning}
For a graph $G = (V, E)$, possibly weighted with weights $\lambda \in \R_{>0}^E$, we can output $s$ independent spanning trees with $n^{-O(1)}$ total variation distance from the distribution $\mu(T) \propto \prod_{e \in T}\lambda_e$ in time $\Otilde{\card{E} + s\card{V}}$.
\end{corollary}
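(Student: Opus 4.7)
The plan is to instantiate the general framework of \cref{thm:overestimate,thm:marginalinformal} with the weighted spanning tree distribution $\mu(T) \propto \prod_{e \in T} \lambda_e$. This is a classical strongly Rayleigh measure on $\binom{[\card{E}]}{\card{V}-1}$, so in the notation of those theorems we have $n = \card{E}$ and $k = \card{V} - 1$. The only problem-specific ingredient to supply is a baseline implementation of the conditional sampling cost $\mathcal{T}_\mu(t, k)$: sampling from $\mu$ restricted to spanning trees contained in a prescribed edge subset $T$ with $\card{T} = t$ is literally sampling a weighted random spanning tree of the subgraph $(V, T)$. Using the nearly-linear-time algorithm of \textcite{ALOVV21}, we may take $\mathcal{T}_\mu(t, k) = \Otilde{t}$.

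With this baseline fixed, the first step is to run the preprocessing of \cref{thm:marginalinformal} to compute edge marginal overestimates $q_e \ge \P_{T \sim \mu}{e \in T}$ summing to $O(k)$. That theorem bounds its cost by $\Otilde{n/k} = \Otilde{\card{E}/\card{V}}$ calls to a sampler for (approximately) isotropic strongly Rayleigh conditionals on ground sets of size $\Otilde{k}$. Each such call can be served directly by \textcite{ALOVV21} on the induced subgraph, since that subgraph already has $\Otilde{\card{V}}$ edges; this costs $\Otilde{\card{V}}$ per call. Multiplying, the preprocessing terminates in $\Otilde{(\card{E}/\card{V}) \cdot \card{V}} = \Otilde{\card{E}}$ time and returns the overestimate vector $(q_e)_{e \in E}$.

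To produce each of the $s$ output trees, the plan is to invoke \cref{thm:overestimate} on $\mu$ with the stored overestimates and $K = O(k)$. By that theorem, each invocation costs $\Otilde{1}$ calls to $\mathcal{T}_\mu(O(K), k) = \mathcal{T}_\mu(\Otilde{k}, k) = \Otilde{\card{V}}$, and returns a tree within TV distance $n^{-O(1)}$ of $\mu$. Summing the $s$ per-sample costs with the one-shot preprocessing yields the claimed $\Otilde{\card{E} + s\card{V}}$ bound, and independence of the $s$ trees follows by running the sampler with independent randomness each time.

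The only delicate point, and the one I would want to check carefully, is organizing the bootstrapping between \cref{thm:overestimate} and \cref{thm:marginalinformal} so that no infinite regress arises: every recursive sampling call inside the preprocessing occurs on a ground set already of size $\Otilde{k}$, at which point we can short-circuit the sparsification machinery and invoke \cite{ALOVV21} as a black box. Beyond this composition, the proof is a routine bookkeeping exercise combining the three cited results with the baseline spanning tree sampler.
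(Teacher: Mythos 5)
Your proposal is correct and follows essentially the same route as the paper's proof: use \textcite{ALOVV21} as the baseline sampler so that $\mathcal{T}_\mu(t,k)=\Otilde{t}$, obtain marginal overestimates summing to $O(\card{V})$ via \cref{lemma:marginalestimate} in $\Otilde{\card{E}}$ time, and then apply \cref{thm:overestimate} for each of the $s$ samples at cost $\Otilde{\card{V}}$ each. The bootstrapping concern you flag is already handled by the base case of \cref{lemma:marginalestimate} (trivial overestimates of $1$ once $\card{S}\le 4k$), so no infinite regress arises.
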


\begin{corollary}[Sampling DPPs]
\label{cor:dpp}
Given an $n \times n$ positive semidefinite matrix $L$, there is an algorithm that outputs $s$ independent approximate samples from the $k$-DPP defined by $L$ in time $\Otilde{nk^{\omega-1} + sk^\omega}.$
\end{corollary}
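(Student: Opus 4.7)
The plan is to combine the preprocessing of \cref{thm:marginalinformal} with the per-sample reduction of \cref{thm:overestimate}, using direct linear-algebraic sampling of $k$-DPPs on small ground sets as the baseline primitive. Concretely, I would first observe that for any set $T\subseteq[n]$, the restriction of the input $k$-DPP to subsets of $T$ is again a $k$-DPP, defined by the principal submatrix $L_{T,T}$; by the baseline algorithm of \cref{lem:dpp-matmult}, we can produce an exact sample from such a $k$-DPP on a ground set of size $t$ in time $\Otilde{t^\omega}$. Setting $t=\Otilde{k}$ gives an isotropic-range sampler $\mathcal{T}_\mu(\Otilde{k},k)=\Otilde{k^\omega}$, because on a ground set of size $\Otilde{k}$ the marginals of a $k$-sized distribution automatically sum to $k$ and hence are $\Otilde{1}$-balanced, so the distribution is approximately isotropic ``for free.''

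Next I would feed this baseline into \cref{thm:marginalinformal} to obtain approximate marginal overestimates $q_i\ge\P_{S\sim\mu}{i\in S}$ with $K=\sum_i q_i = O(k)$. That step costs $\Otilde{n/k}$ invocations of the baseline sampler, each of cost $\Otilde{k^\omega}$, for a total preprocessing time of
\[
\Otilde{(n/k)\cdot k^\omega}=\Otilde{nk^{\omega-1}}.
\]
With marginal overestimates in hand, I would then apply \cref{thm:overestimate} once per desired sample: it produces a sample of TV distance $n^{-O(1)}$ from $\mu$ using $\Otilde{1}$ calls to $\mathcal{T}_\mu(O(K),k)=\mathcal{T}_\mu(O(k),k)$, and by the same baseline argument each such call is $\Otilde{k^\omega}$. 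Summing over $s$ samples and adding the preprocessing cost gives the claimed $\Otilde{nk^{\omega-1}+sk^\omega}$ bound, with independence across samples inherited from running the reduction independently each time.

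The main obstacle I expect is bookkeeping rather than conceptual: one has to verify that the submatrices $L_{T,T}$ needed by the baseline sampler (both during preprocessing, where $T$ is an $\Otilde{k}$-sized set produced by the marginal-estimation routine, and at sample time, where $T$ is the $O(K)$-sized ``sparsified'' ground set from \cref{alg:random-walk}) can be assembled and manipulated within the stated time without an extra $n$ factor. This is handled by storing $L$ once and reading only the needed $\Otilde{k}\times\Otilde{k}$ principal submatrices on demand; accessing $t^2=\Otilde{k^2}$ entries and doing Cholesky-type exact DPP sampling on them is well within $\Otilde{k^\omega}$. A small additional point to check is that the error parameter in \cref{thm:marginalinformal} can be tuned so that the overestimates still sum to $O(k)$ (rather than, say, $\Otilde{k}$), because the per-sample cost $\mathcal{T}_\mu(O(K),k)$ depends on $K$ through a factor of $K^\omega$ via the linear-algebraic baseline; any polylog slack in $K$ is absorbed by the $\Otilde{\cdot}$ notation, so this is not a serious issue.
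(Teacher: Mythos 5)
Your proposal matches the paper's proof of \cref{cor:dpp}: both use \cref{lem:dpp-matmult} as the baseline $\Otilde{t^\omega}$ sampler for the $k$-DPP restricted to $L_{T,T}$, invoke \cref{lemma:marginalestimate} for the $\Otilde{n/k\cdot k^\omega}=\Otilde{nk^{\omega-1}}$ preprocessing, and then apply \cref{thm:overestimate} with $\Otilde{1}$ calls of cost $\Otilde{k^\omega}$ per sample. The extra bookkeeping you flag (reading only the needed principal submatrices, polylog slack in $K$) is correctly dismissed and is left implicit in the paper as well.
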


Finally, we remark that our methods also show analogous mixing times of $\Otilde{k}$ steps for the Markov chain that uses small up-down steps, i.e., calls to $\mathcal{T}_{\mu}(k+1, k)$, when sampling isotropic strongly Rayleigh distributions. Such steps are easy to implement in practice, and were used to efficiently sample from general strongly Rayleigh and logconcave distributions \cite{CGM19,ALOVV21}. We formally state these results in \cref{thm:updown} in \cref{sec:mainproof}.

\subsection{Techniques and comparison to prior work}\label{sec:comparison}
We depart from previous analyses of \cref{alg:random-walk} and take the new approach of viewing the sparsification procedure as a down-up random walk on high-dimensional expanders \cite[see, e.g.,][]{KO18}. We establish that isotropy significantly improves the ``expansion'' of the high-dimensional-expander. We use the notion of expansion called entropic independence \cite{AJKPV21}, which is one of the few able to yield modified log-Sobolev inequalities and tight mixing times for down-up walks.

The random walk in \cref{alg:random-walk} can be seen as the down-up walk (see \cref{sec:prelim} for definition) on the complement/dual distribution associated with $\mu$; note that each step of this walk involves a sparsified sampling problem, where we only have to sample from a distribution on $\binom{T_i}{k}$. For this we use a baseline sampling algorithm, a Markov chain based on a clever link-cut tree data structure for spanning trees, and a na\"{i}ve matrix-multiplication-based sampler for DPPs.

Below we describe the main techniques we use.

\paragraph{Boosted entropic independence under isotropy.} The main tool we use to bound the mixing time of the random walk is the notion of entropic independence (see \cref{sec:prelim} for definition) \cite{AJKPV21}. While standard results about strongly Rayleigh distributions give an out-of-the-box factor $1$ entropic independence for the complement distribution $\bmu$, this is too weak for our purposes as it only implies a mixing time of $\simeq \Otilde{(n-k)/t}$ for \cref{alg:random-walk}, which has an unacceptable dependence on $n$. This is not surprising, as these black-box results do not incorporate isotropy of $\mu$. In this work, we show that whenever $\mu$ has entropic independence and its marginals are not too large, the complement distribution $\bmu$ has to have a boosted entropic independence, better by a logarithmic factor over what is na\"{i}vely expected (\cref{thm:one level contraction}).

\paragraph{Average case local-to-global and concentration of marginals.}
The standard machinery for establishing mixing times using entropic independence (i.e., the so-called local-to-global method \cite{AL20}) needs entropic independence of not just the distribution $\bmu$, but all of its conditionings as well. Conditioning $\bmu$ on a set of elements is the same as throwing those elements out of the ground set for $\mu$. Unfortunately, in the worst case, this can significantly imbalance the marginals of $\mu$. \Tag{As an example, consider the spanning tree distribution on a complete graph, which is by symmetry isotropic. Throwing edges out, we can create any graph as a subgraph of the complete graph; for example, we can throw out all but one edge in a cut to make the marginal of that edge equal to $1$.} To overcome this obstacle, we show that with high probability, i.e., in an average sense over the choice of elements in the conditioning, the marginals remain balanced (\cref{thm:concentration}) and combine this with an average local-to-global result adapted from \cite{AASV21} (\cref{thm:entropy contraction}) to establish the tight mixing time. As far as we know, this is the first application of an average local-to-global theorem. Our strategy of showing average-case isotropy under conditionings is reminiscent of the strategy employed in works on the KLS conjecture which show approximate isotropy holds under an appropriate localization process \cite{Chen21,CE22,KL22}.

\paragraph{Improved marginal estimation.} Our main focus is on the time per sample after preprocessing, but we also obtain fast algorithms that improve the preprocessing runtime compared to prior works. Our improved procedures are able to shave off $\poly(k)$ factors from the runtime of marginal estimation (\cref{lemma:marginalestimate}), and are essential for our faster $\Otilde{nk^{\omega-1}}$ time algorithm for sampling from a $k$-DPP. This is achieved by a recursive procedure that uses marginals of the restriction of $\mu$ to roughly half the domain $[n]$ as overestimates for the marginals of $\mu$. In the end, marginal overestimation is roughly reduced to $\simeq \Otilde{n/k}$ subtasks of marginal overestimation for distributions over domains of size $\Otilde{k}$.

\paragraph{Barriers faced by prior approaches.} In order to derive the tight sparsification of $t=\Otilde{k}$ in \cref{alg:random-walk}, we had to rethink the entire analysis technique. To emphasize the importance of tight bounds on $t$, we note that prior results on general strongly Rayleigh measures \cite{Der19,CDV20,DCV19,AD20} had at least a quadratic dependence on the output size $k$, which made them moot for random spanning trees (where $k^2$ is always larger than the total number of edges in the graph). The barrier faced by the aforementioned works, and also that of \cite{DKPRS17} is roughly speaking that for the regime $t=\Otilde{k}$, subsets $T_i$ are \emph{not} good sparsifiers for partition functions. To appreciate this better, consider a simple distribution $\mu$ on $\binom{[n]}{k}$ defined as follows: first we partition $[n]$ into disjoint sets $U_1,\dots,U_k$ of size $n/k$ each, and then define our distribution as uniform over sets which pick exactly one element from each $U_i$. Clearly this distribution is isotropic. Now suppose that we select a uniformly random $ck$-sized set $T$ from $[n]$. The intersection of $T$ with each $U_i$ has expected size $c$. For small values of $c$, the distribution of this intersection size is well-approximated by a Poisson distribution. The count / partition function of the distribution restricted to $T$ is $\prod_{i=1}^k \card{T\cap U_i}$.

The fluctuations of each $T\cap U_i$ are on the order of $\sqrt{c}$. These fluctuations make the above product typically very far from its mean, unless $c$ is growing at least polynomially with $k$. A careful analysis (similar to \cite{DKPRS17}) would show that $c\simeq \sqrt{k}$ is the threshold after which the count concentrates around the mean. To overcome this barrier, we do not use counts in our analysis at all. Rather, we show that \emph{marginals} do concentrate all the way down to the threshold $t=\Otilde{k}$, using a martingale argument. We combine this concentration of marginals with the fact that isotropy improves entropic independence to show that isotropic strongly Rayleigh distributions are extremely good high-dimensional expanders in an average sense.

\Tag{
\subsection{Organization} In \cref{sec:prelim} we collect preliminary notions relating to distributions, conditionals, and Markov chains. We additionally introduce entropic-independence and local-to-global theorems that we use to analyze the down-up walk that our sampling algorithms are based on. In \cref{sec:entropy} we show our main bound on the entropy contraction of a down step of the complement distribution of a strongly Rayleigh distribution with bounded marginals. In \cref{sec:concentration} we show that random marginals of strongly Rayleigh distributions stay bounded with high probability, which is essential to applying the average-case local-to-global principle. In \cref{sec:estimation} we give a simple and efficient procedure for estimating marginal overestimates based on recursive sampling. In \cref{sec:mainproof} we combine the previous sections to prove our main results about sampling spanning trees, DPPs, and strongly Rayleigh distributions in general. Finally, deferred proofs are given in \cref{sec:defer}.
}
    \Tag{
		\subsection*{Acknowledgments}
		\acks{}
		
		\funding{}
	}
    \section{Preliminaries}
\label{sec:prelim}
We use $[n]$ to denote the set $\set{1,\dots,n}$. We view distributions/measures defined over a finite ground set $\Omega$ interchangeably as either (probability mass) functions $\mu:\Omega \to \R_{\geq 0}$ or just row vectors $\mu\in \R^\Omega$.

For a distribution $\mu\in \R^{\binom{[n]}{k}}$, let $p(\mu) \in \R^n$ denote the marginals of $\mu$, i.e., $p(\mu)_i := \P_{S \sim \mu}{i \in S}.$ Denote $p(\mu)^{\max} := \max\set{p(\mu)_i\given i\in [n]}.$ When $\mu$ is clear from context, we write $p$ instead of $p(\mu).$ We define $\bmu: \binom{[n]}{n-k}\to \R_{\geq 0}$ as the \emph{complement distribution} associated to $\mu$, defined as
\[ \bmu(S) := \mu([n]\backslash S).\]

Our analysis (in particular for applying a local-to-global principle) requires looking at restrictions of $\mu$ to specific subset of the ground set $[n]$ of elements. In the complement, this corresponds to conditioning that $\bar{\mu}$ contains certain elements.
\begin{definition}[Restricted distribution]
\label{def:conditional}
For a distribution $\mu$ defined over subsets of a ground set $[n]$ and $S \subseteq [n]$, define $\mu_S$ to be the distribution of $F \sim \mu$ restricted to the set $S$, i.e., conditioned on the event $F \subseteq S.$
\end{definition}
\begin{definition}[Conditional distribution]\label{def:conditioned}
For a distribution $\mu$ defined over subsets of a ground set $[n]$ and $T \subseteq [n]$, define $\mu^T$ to be the distribution of $F \sim \mu$ conditioned on the event $F \supseteq T.$
\end{definition}

\subsection{Markov chains and functional inequalities}
\label{prelim:markov}
Let $\mu$ and $\nu$ be probability measures on a finite set $\Omega$. The Kullback-Liebler divergence (or relative entropy) between $\nu$ and $\mu$ is given by
\[\DKL{\nu \river \mu} = \sum_{x \in \Omega}\nu(x)\log\parens*{\frac{\nu(x)}{\mu(x)}},\]
with the convention that this is $\infty$ if $\nu$ is not absolutely continuous with respect to $\mu$. \Tag{By Jensen's inequality, $\DKL{\nu \river \mu} \geq 0$ for any probability measures $\mu, \nu$.} The total variation distance between $\mu$ and $\nu$ is given by
\[d_\TV(\mu, \nu) = \frac{1}{2}\sum_{x \in \Omega}\abs{\mu(x) - \nu(x)}.\]

A Markov chain on $\Omega$ is specified by a row-stochastic non-negative transition matrix $P \in \R^{\Omega \times \Omega}$. We refer the reader to \cite{LP17} for a detailed introduction to the analysis of Markov chains. As is common, we will view probability distributions on $\Omega$ as row vectors. Recall that a transition matrix $P$ is said to be reversible with respect to a distribution $\mu$ if for all $x,y \in \Omega$, $\mu(x)P(x,y) = \mu(y)P(y,x)$. In this case, it follows immediately that $\mu$ is a stationary distribution for $P$, i.e., $\mu P = \mu$. If $P$ is further assumed to be ergodic, then $\mu$ is its unique stationary distribution, and for any probability distribution $\nu$ on $\Omega$, $d_\TV(\nu P^{t}, \mu) \to 0$ as $t \to \infty$. The goal of this paper is to investigate the rate of this convergence. 

\begin{definition}
Let $P$ be an ergodic Markov chain on a finite state space $\Omega$ and let $\mu$ denote its (unique) stationary distribution. For any probability distribution $\nu$ on $\Omega$ and $\epsilon \in (0,1)$, we define $t_\mix(P, \nu, \epsilon)$ to be
\[ \min\set{t\geq 0 \given d_\TV(\nu P^{t}, \mu)\leq \epsilon},\]
and let $t_\mix(P,\epsilon)$ denote
\[\max\set*{\min\set{t\geq 0 \given d_\TV(\1_{x}P^t, \mu) \leq \epsilon}\given x\in \Omega},\]
where $\1_{x}$ is the point mass supported at $x$. 
\end{definition}

We will drop $P$ and $\nu$ if they are clear from context. Moreover, if we do not specify $\epsilon$, then it is set to $1/4$. This is because the growth of $t_{\operatorname{mix}}(P,\epsilon)$ is at most logarithmic in $1/\epsilon$ \cite[cf.][]{LP17}. 

\begin{lemma}
\label{lem:entropy-contraction-implies-mlsi}
Let $\mu$ be a probability measure on the finite set $\Omega$. Let $P$ denote the transition matrix of an ergodic, reversible Markov chain on $\Omega$ with stationary distribution $\mu$. Suppose there exists some $\alpha \in (0,1]$ such that for all probability measures $\nu$ on $\Omega$, we have
\[\DKL{\nu P \river \mu P} \leq (1-\alpha)\DKL{\nu \river \mu}.\]
Then $t_\mix(P,\epsilon)\leq$
\[ \ceil*{ \frac{1}{\alpha} \cdot \parens*{\log\log\parens*{\frac{1}{\min\set{\mu(x)\given x\in \Omega}}} + \log\parens*{\frac{1}{2\epsilon^2}}} }.\]
\end{lemma}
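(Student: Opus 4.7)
The plan is to convert entropy contraction (which is a modified log-Sobolev style inequality once combined with stationarity) into a bound on total variation distance via two standard steps: iterate the contraction to drive the KL divergence down geometrically, then apply Pinsker's inequality to convert the bound to total variation.

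First I would observe that since $\mu$ is the stationary distribution of $P$, we have $\mu P = \mu$, so the hypothesis can be rewritten as
\[\DKL{\nu P \river \mu} \leq (1-\alpha)\DKL{\nu \river \mu}\]
for every probability measure $\nu$ on $\Omega$. Applying this bound recursively with $\nu$ replaced by $\nu P^{s}$ for $s = 0, 1, \dots, t-1$, I obtain
\[\DKL{\nu P^t \river \mu} \leq (1-\alpha)^t\DKL{\nu \river \mu} \leq e^{-\alpha t}\DKL{\nu \river \mu}.\]

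Next I would specialize $\nu = \1_x$ for the worst-case starting state $x \in \Omega$. A direct computation gives $\DKL{\1_x \river \mu} = \log(1/\mu(x)) \leq \log(1/\mu_{\min})$, where $\mu_{\min} := \min\set{\mu(x) \given x \in \Omega}$. Combining with the iterated contraction yields
\[\DKL{\1_x P^t \river \mu} \leq e^{-\alpha t}\log(1/\mu_{\min}).\]
Then Pinsker's inequality, $d_\TV(\cdot,\cdot)^2 \leq \tfrac{1}{2}\DKL{\cdot \river \cdot}$, gives
\[d_\TV(\1_x P^t, \mu)^2 \leq \tfrac{1}{2}e^{-\alpha t}\log(1/\mu_{\min}).\]
To force the right-hand side below $\epsilon^2$ it suffices to take $t \geq \frac{1}{\alpha}\bigl(\log\log(1/\mu_{\min}) + \log(1/(2\epsilon^2))\bigr)$, and taking the ceiling gives exactly the claimed bound.

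There is essentially no hard step here: the lemma is a packaging of two off-the-shelf tools (iteration of a contraction and Pinsker). The only things to be careful about are using $\mu P = \mu$ to rewrite the contraction purely in terms of $\mu$ (so that iteration is clean), and maximizing $\DKL{\1_x \river \mu}$ over $x$ to isolate the $\log\log(1/\mu_{\min})$ factor rather than a $\log(1/\mu_{\min})$ factor — this is exactly what makes the final bound polynomial in $1/\alpha$ times only logarithmic-logarithmic in the state space, which is the standard strength of an entropy-contraction/MLSI mixing bound as opposed to a spectral-gap/Poincar\'e one.
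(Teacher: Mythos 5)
Your proof is correct and is exactly the standard argument the paper invokes (and attributes to Bobkov–Tetali): use stationarity to iterate the contraction, bound the initial divergence from a point mass by $\log(1/\mu_{\min})$, and finish with Pinsker's inequality, which is precisely where the $\log\log(1/\mu_{\min})$ term comes from. No issues.
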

This is the standard argument for bounding mixing times via modified log-Sobolev inequalities and can be found in, e.g., \cite{BT06}.

\subsection{Strongly Rayleigh distributions}
\label{prelim:sr}
For density function $\mu:\binom{[n]}{k}\to\R_{\geq 0}$, the generating polynomial of $\mu$ is the multivariate $k$-homogeneous polynomial defined as follows:
\[ g_\mu(z_1,\dots,z_n)=\sum_{S\in \binom{[n]}{k}} \mu(S)\prod_{i\in S} z_i. \]
\begin{definition}
    Consider the open half-plane $H = \set*{z \given \Im(z) > 0} \subseteq \C$. We say a polynomial $g(z_1,\cdots, z_n) \in \R[z_1, \cdots, z_n]$ is real-stable if $g$ does not have roots in $H^n.$ For convenience, the zero polynomial is taken to be real-stable.
\end{definition}
A distribution $\mu: 2^{[n]} \to \R_{\geq 0}$ is strongly Rayleigh iff its generating polynomial is real stable \cite[see][]{BBL09}. If $\mu$ is strongly Rayleigh, then its conditional and restricted distributions (see \cref{def:conditional,def:conditioned}) are also strongly Rayleigh. The key fact we use about strongly Rayleigh distributions is that they are negatively correlated \cite[see, e.g.,][]{BBL09}, i.e., the marginals (of non-restricted elements) increase under restrictions (\cref{def:conditional}): 
\[ \P_{S\sim \mu}{i\in S}\leq \P_{S\sim \mu_T}{i\in S}\enspace\text{for}\enspace i\notin T. \]

\subsection{Down-up and up-down walks}
\begin{definition}[Down operator]
\label{def:down}
	For $\l\leq k$ define the row-stochastic matrix $D_{k\to \l}\in \R_{\geq 0}^{\binom{[n]}{k}\times \binom{[n]}{\l}}$ by
	\[ D_{k\to \l}(S, T)=\begin{cases}
		0 & \text{if }T\not\subseteq S,\\ 
		\frac{1}{\binom{k}{\l}}& \text{otherwise}.
	\end{cases}\]
	Note that for a distribution $\mu$ on size $k$ sets, $\mu D_{k\to \l}$ will be a distribution on size $\l$ sets. In particular, $\mu D_{k\to 1}$ will be the vector of normalized marginals of $\mu$: $(\P{i\in S}/k)_{i\in [n]}$, i.e., $p(\mu)/k$.
\end{definition}

\begin{definition}[Up operator]
\label{def:up}
For $\ell \le k$ define the row-stochastic matrix $U_{\ell\to k} \in \R_{\geq 0}^{\binom{[n]}{\l}\times \binom{[n]}{k}}$ by
\[
U_{\l\to k}(T, S)=\begin{cases}
		0 & \text{if }T\not\subseteq S,\\ 
		\frac{\mu(S)}{\sum_{T\ni S'}\mu(S')} & \text{otherwise}.
	\end{cases}
\]
\end{definition}

As in \cite{AD20,ADVY21}, we consider the following Markov chain $M^t_{\mu}$ defined for any positive integer
$t$, with the state space $\supp(\mu).$ Starting from $S_0 \in \supp(\mu)$, one step of the chain is:

\begin{enumerate}
    \item Sample $T \in \binom{[n] \setminus S_0}{t - k}$ uniformly at random.
    
    \item Downsample $S_1 \sim \mu_{S_0 \cup T}$, where $\mu_{S_0 \cup T}$ is $\mu$ restricted to $S_0 \cup T$\Tag<comment>{, a.k.a.\ $\1_{S_0\cup T}\star \mu$,} and update $S_0$ to be $S_1.$
\end{enumerate}

\begin{proposition}
The complement of $S_1$ is distributed according to $ \bmu_0 D_{(n-k) \to (n-t)} U_{(n-t) \to (n-k)} $ where $\mu_0$ is the distribution of the set $S_0.$
\end{proposition}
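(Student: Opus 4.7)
The plan is to verify the identity by unpacking both sides via the complement bijection $S \mapsto [n]\setminus S$, since neither the down operator $D$ nor the up operator $U$ needs any nontrivial machinery — only careful index bookkeeping.

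First I would set up notation. Let $\bar S_0 := [n]\setminus S_0$ and $\bar S_1 := [n]\setminus S_1$; by definition of the complement distribution, $\bar S_0$ has law $\bar\mu_0$. The goal is to identify the composite operator $\bar\mu_0 D_{(n-k)\to(n-t)} U_{(n-t)\to(n-k)}$ step by step with the two-step description of $M^t_\mu$ (as a map from $\bar S_0$ to $\bar S_1$).

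Second, I would analyze the down step. Applied to $\bar S_0 \in \binom{[n]}{n-k}$, $D_{(n-k)\to(n-t)}$ samples a uniformly random $(n-t)$-subset $\bar T \subseteq \bar S_0$. Complementing within $\bar S_0$, this is the same as sampling a uniformly random $(t-k)$-subset $T := \bar S_0 \setminus \bar T \subseteq [n]\setminus S_0$, which is exactly step 1 of $M^t_\mu$. In particular $[n]\setminus \bar T = S_0 \cup T$, a $t$-set containing $S_0$.

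Third, I would analyze the up step. Applied to $\bar T$, the operator $U_{(n-t)\to(n-k)}$ samples $\bar S_1 \in \binom{[n]}{n-k}$ with $\bar S_1 \supseteq \bar T$ with probability proportional to $\bar\mu(\bar S_1) = \mu([n]\setminus \bar S_1)$. Writing $S_1 := [n]\setminus \bar S_1$, the constraint $\bar T \subseteq \bar S_1$ becomes $S_1 \subseteq [n]\setminus \bar T = S_0 \cup T$, while $|S_1| = k$, and the weight is proportional to $\mu(S_1)$. This matches the definition of $\mu_{S_0 \cup T}$ used in step 2 of $M^t_\mu$ exactly.

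Combining the two correspondences, each path $\bar S_0 \to \bar T \to \bar S_1$ taken by $\bar\mu_0 D_{(n-k)\to(n-t)} U_{(n-t)\to(n-k)}$ is in bijection, with matching probabilities, to a path $S_0 \to T \to S_1$ of $M^t_\mu$; taking complements yields the claim. No step is a real obstacle — the only thing to watch is that the ``uniform over subsets'' normalizations induced by $D$ on $\bar S_0$ agree with the uniform selection of $T$ in $[n]\setminus S_0$, which is just the observation that $\binom{n-k}{n-t}=\binom{n-k}{t-k}$ via complementation inside $\bar S_0$.
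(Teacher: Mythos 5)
Your proof is correct and is exactly the bookkeeping the paper leaves implicit (the paper states this proposition without proof, as an immediate consequence of the definitions of $D$, $U$, and the complement distribution). The two correspondences you identify --- uniform choice of $\bar T\subseteq\bar S_0$ matching uniform choice of $T\subseteq[n]\setminus S_0$ via $\binom{n-k}{n-t}=\binom{n-k}{t-k}$, and the up step's weights $\bar\mu(\bar S_1)=\mu(S_1)$ over $\bar S_1\supseteq\bar T$ matching $\mu_{S_0\cup T}$ --- are precisely the intended argument.
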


\begin{proposition} \label{prop:intermediate sampling detail balance}
For any distribution $\mu: \binom{[n]}{k}\to \R_{\geq 0}$ that is strongly Rayleigh, the chain $M^t_{\mu}$ for $t \geq k+1$ is irreducible, aperiodic and has stationary distribution $\mu.$
\end{proposition}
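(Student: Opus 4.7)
The plan is to verify the three assertions separately, exploiting the observation from the preceding proposition that one step of $M^t_\mu$ is conjugate, via the complement bijection $S\mapsto [n]\setminus S$, to applying $D_{(n-k)\to(n-t)}U_{(n-t)\to(n-k)}$ to $\bmu_0$. This reduces stationarity to a standard fact about down-up walks and isolates the combinatorial content of strongly Rayleigh into a single appeal: the structure of $\supp(\mu)$.

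For stationarity I would compute $\bmu \cdot D_{(n-k)\to(n-t)}U_{(n-t)\to(n-k)}$ directly from \cref{def:down,def:up}. Expanding the product and using that $U_{(n-t)\to(n-k)}(T,\cdot)$ is, by definition, the restriction of $\bmu$ proportional to $\mathbf{1}[T\subseteq\cdot]$ normalized by its total mass, the inner sum cancels the $1/\binom{n-k}{n-t}$ from the down step, leaving $\bmu$; in fact the same calculation shows the chain is reversible with respect to $\bmu$. Pulling back through the complement bijection gives that $\mu$ is stationary for $M^t_\mu$. For aperiodicity I would exhibit a positive-probability self-loop at every $S_0\in\supp(\mu)$: for any $T\in\binom{[n]\setminus S_0}{t-k}$ chosen in step 1, step 2 returns $S_1=S_0$ with probability $\mu(S_0)/\sum_{S'\in\binom{S_0\cup T}{k}}\mu(S')>0$, since $S_0$ itself lies in $\binom{S_0\cup T}{k}\cap\supp(\mu)$.

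Irreducibility is the only part that genuinely uses the strongly Rayleigh hypothesis, and I expect this to be the main obstacle. The key input is the classical fact that the support of a homogeneous strongly Rayleigh distribution is the set of bases of a matroid — a consequence of the real-stability of the generating polynomial of $\mu$ (see, e.g., \cite{BBL09}). The symmetric basis exchange property then says: for any $S, S'\in\supp(\mu)$ and any $a\in S\setminus S'$, there is $b\in S'\setminus S$ with $S\setminus\{a\}\cup\{b\}\in\supp(\mu)$. For $t=k+1$, a single step of $M^t_\mu$ can realize such an exchange: starting from $S$ and choosing $T=\{b\}$, the set $S\setminus\{a\}\cup\{b\}$ lies in $\binom{S\cup T}{k}\cap\supp(\mu)$ and is therefore reached with positive probability under $\mu_{S\cup T}$. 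Iterating exchanges connects any two elements of $\supp(\mu)$, giving irreducibility. For $t>k+1$ the same argument works by padding $T$ with additional arbitrary elements of $[n]\setminus S$ disjoint from the target exchange; the positivity of the transition into $S\setminus\{a\}\cup\{b\}$ is unchanged.
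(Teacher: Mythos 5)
Your proposal is correct. The paper itself gives no proof of this proposition (it is imported as a standard fact from \cite{AD20,ADVY21}), and your argument is exactly the standard one: the direct computation showing $\bmu$ is stationary (indeed reversible) for $D_{(n-k)\to(n-t)}U_{(n-t)\to(n-k)}$, self-loops for aperiodicity, and irreducibility via Br\"and\'en's theorem that the support of a homogeneous strongly Rayleigh measure is the set of bases of a matroid together with basis exchange. All three steps check out, including the padding of $T$ for $t>k+1$.
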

\subsection{Entropic independence}
We say that a distribution $\mu$ is entropically independent if the down operator $D_{k\to1}$ significantly contracts the relative entropy between $\nu$ and $\mu$ for any distribution $\nu$.
\begin{definition}[Entropic independence \cite{AJKPV21}]
    \label{def:entropic-independence}
A probability distribution $\mu$ on $\binom{[n]}{k}$ is said to be $(1/\alpha)$-entropically independent, if for all probability distributions $\nu$ on $\binom{[n]}{k}$,
\[ \DKL{\nu D_{k\to 1} \river \mu D_{k\to 1}}\leq \frac{1}{\alpha k}\DKL{\nu \river \mu}.  \]
\end{definition}
Any distribution with a log-concave generating polynomial (e.g., uniform on bases of a matroid) is $1$-entropically independent. This includes all strongly Rayleigh distributions.
\begin{lemma}[{\cite[Theorem 4]{AJKPV21}}]
\label{lemma:1entropic}
Any strongly Rayleigh $\mu$ is $1$-entropically independent. The conditional and restricted distributions of $\mu$ are also strongly Rayleigh, and thus $1$-entropically independent. 
\end{lemma}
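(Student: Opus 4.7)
My plan is to reduce the lemma to two independent ingredients: an analytic characterization of $1$-entropic independence in terms of a log-concavity property of the generating polynomial $g_\mu$, and the classical fact that any real-stable homogeneous polynomial with non-negative coefficients is log-concave on the positive orthant. The characterization, which is the content of \cite[Theorem 4]{AJKPV21}, reduces $1$-entropic independence of $\mu$ to showing that $g_\mu$ is log-concave on $\R_{>0}^n$ (equivalently, that $\log g_\mu$ is concave there).

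For the log-concavity step I would invoke Gurvits' theorem: every $k$-homogeneous real-stable polynomial with non-negative coefficients is log-concave on $\R_{>0}^n$. Since $g_\mu$ is real-stable by the strongly Rayleigh hypothesis, this immediately yields $1$-entropic independence for $\mu$. Reproducing the characterization itself is the subtler half: starting from the variational formula $\DKL{\nu \river \mu} = \sup_\phi\parens*{\mathbb{E}_\nu[\phi] - \log \mathbb{E}_\mu[\exp\phi]}$ and specializing to additive test functions $\phi(S) = \sum_{i\in S}\lambda_i$, one uses the identity $\log\mathbb{E}_\mu[\exp\phi] = \log g_\mu(e^\lambda) - \log g_\mu(\mathbf{1})$ (with $e^\lambda$ the coordinatewise exponential) to recast the $1/k$-contraction of KL divergence as a Jensen-type inequality for $\log g_\mu$ along rays in $\R_{>0}^n$. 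Combined with $k$-homogeneity, that inequality is equivalent to concavity of $\log g_\mu$.

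For the closure statement, I would use preservation of real-stability under two standard operations: setting a variable $z_i$ to zero, which corresponds to restricting the ground set to $[n]\setminus\set{i}$, and differentiating $g_\mu$ in $z_i$, which corresponds up to normalization to conditioning on $i \in S$. Iterating these operations yields the generating polynomials of every $\mu_T$ and every $\mu^T$, and both operations preserve real-stability by the Borcea--Br\"and\'en polynomial calculus. Hence all conditional and restricted distributions are themselves strongly Rayleigh, and the first half of the lemma applies to each. The main obstacle I anticipate is the characterization step itself, since the KL-to-log-concavity translation requires careful bookkeeping of the $k$-homogeneous structure; a pragmatic fallback is to apply the Shearer-type entropy inequality available for log-concave polynomials to obtain the KL contraction directly, bypassing the full equivalence.
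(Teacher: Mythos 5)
Your proposal is correct and matches how this lemma is actually established: the paper itself gives no proof, citing \cite[Theorem 4]{AJKPV21} for the $1$-entropic independence of distributions with log-concave (in particular real-stable) generating polynomials, and relying on the standard closure of real-stability under setting variables to zero and differentiation (stated in the preliminaries, following \cite{BBL09}) for the claim about restrictions and conditionals. Your reconstruction of the cited theorem via the variational formula for KL divergence with additive test functions, plus G\r{a}rding/Gurvits-type log-concavity of real-stable polynomials on the positive orthant, is exactly the argument underlying the citation.
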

\subsection{Average-case local-to-global method}
First, we define the notion of the link of the distribution $\mu$ w.r.t.\ a set $T$ \cite[see, e.g.,][]{KO18}. This is almost the same as the conditioned distribution $\mu^T$, see \cref{def:conditioned}, except we remove the set $T$.
\begin{definition}
	For a distribution $\mu:\binom{[n]}{k}\to\R_{\geq 0}$ and a set $T\subseteq [n]$ of size at most $k$, we define the \emph{link of $T$} to be the distribution $\mu^{-T}:\binom{[n]-T}{k-\card{T}}\to \R_{\geq 0}$ which describes the law of the set $S-T$ where $S$ is sampled from $\mu$ conditioned on the event $S\supseteq T$.
\end{definition}


We show that entropic independence for links, i.e., contraction of KL-divergence by $D_{k\to 1}$ operators, results in the contraction of KL-divergence by $D_{k\to \l}$ operators for larger $\l$. While this is by now a well-understood phenomenon, sometimes called the local-to-global method \cite[see, e.g.,][]{AL20}, we use an \emph{average case} variant, adapted from \cite{AASV21}, which only requires entropic independence for a ``typical'' link as opposed to a worst case link. \Tag{We use that we can imagine deleting conditioned out elements in a \emph{random order}. This is essential for our result, where it is \emph{not true} that for every permutation the resulting product of $\rho(\cdot)$ below is large enough.}
\begin{theorem}\label{thm:entropy contraction}
	Suppose that for every set $T$ of size $\leq k-2$, $\mu^{-T}$ contracts KL-divergence in terms of factor parameterized by $\rho(T)$:
	\Tag{\[
		\DKL{\nu D_{k-\card{T}\to 1} \river \mu^{-T}D_{k-\card{T}\to 1}}\leq (1-\rho(T))\DKL{\nu\river \mu^{-T}}.
	\]}
	\Tag<ieeetran>{\begin{multline*}
		\DKL{\nu D_{k-\card{T}\to 1} \river \mu^{-T}D_{k-\card{T}\to 1}}\leq\\ (1-\rho(T))\DKL{\nu\river \mu^{-T}}.
	\end{multline*}}
	In other words assume that $\mu^{-T}$ is $(k-\card{T})(1-\rho(T))$-entropically independent. For a set $T$, define the harmonic mean
	\Tag{
		\[ \gamma_T:=\E*_{e_1,\dots,e_{\card{T}}~\text{uniformly random permutation of }T}{\parens*{\rho(\emptyset)\rho(\set{e_1})\rho(\set{e_1,e_2})\cdots \rho(\set{e_1,\dots,e_{\card{T}-1}})}^{-1}}^{-1}. \]
	}
	\Tag<ieeetran>{
		\[ \gamma_T:=\E*{\parens*{\rho(\emptyset)\rho(\set{e_1})\cdots \rho(\set{e_1,\dots,e_{\card{T}-1}})}^{-1}}^{-1}. \]
		where $e_1,\dots,e_{\card{T}}$ is a random permutation of $T$.
	}
	Then the operator $D_{k\to \l}$ has KL-divergence contraction
	\[ \DKL{\nu D_{k\to\l} \river \mu D_{k\to\l}} \leq (1-\kappa) \DKL{\nu \river \mu}, \]
	with
	\[ \kappa:=\min\set*{\gamma_T\given T\in \binom{[n]}{\l}}. \]
\end{theorem}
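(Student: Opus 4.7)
The plan is to run a standard entropic local-to-global argument but peel off elements \emph{one at a time in a uniformly random order}, so that the final bound takes a harmonic mean over orderings rather than a single worst-case ordering.

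First, define $F(T) := \DKL{\nu^{-T} \river \mu^{-T}}$ for every $T \subseteq [n]$ with $\card{T} \leq k$. Consider the joint distribution of $(S, e)$, where $S \sim \nu^{-T}$ and $e$ is a uniformly random element of $S$, compared against the analogous joint built from $\mu^{-T}$. Decomposing the KL divergence of the two joints two different ways (conditioning on $S$ first gives $F(T)$; conditioning on $e$ first gives the terms below) yields the chain-rule identity
\[ F(T) = \DKL{\nu^{-T} D_{k-\card{T} \to 1} \river \mu^{-T} D_{k-\card{T} \to 1}} + \E_{e \sim \nu^{-T} D_{k-\card{T} \to 1}}{F(T \cup \{e\})}, \]
using the identification $(\nu^{-T})^{-\{e\}} = \nu^{-(T \cup \{e\})}$ on the conditional term. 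Combining with the hypothesized one-step contraction and rearranging gives the one-step peeling inequality $F(T) \leq \rho(T)^{-1} \E_{e}{F(T \cup \{e\})}$. Iterating $\ell$ times starting from $T = \emptyset$ and expanding the nested expectations produces
\[ F(\emptyset) \leq \E_{(e_1, \ldots, e_\ell)}{\frac{F(\{e_1, \ldots, e_\ell\})}{\rho(\emptyset)\, \rho(\{e_1\}) \cdots \rho(\{e_1, \ldots, e_{\ell-1}\})}}, \]
where the outer expectation samples $S \sim \nu$ and then takes $(e_1, \ldots, e_\ell)$ to be a uniformly random ordered tuple of $\ell$ distinct elements of $S$.

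The crucial observation is an exchangeability: the marginal of the ordered tuple $(e_1, \ldots, e_\ell)$ factors as ``first sample the unordered set $T = \{e_1, \ldots, e_\ell\} \sim \nu D_{k \to \ell}$, then draw a uniformly random permutation of $T$''. This follows from a direct computation — each ordering of any fixed $T$ arises with equal probability under the greedy process. Re-parametrizing accordingly, the inner average over permutations of $T$ matches the definition of $1/\gamma_T$, so
\[ F(\emptyset) \leq \E_{T \sim \nu D_{k \to \ell}}{\frac{F(T)}{\gamma_T}} \leq \frac{1}{\kappa} \, \E_{T \sim \nu D_{k \to \ell}}{F(T)}. \]

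To finish, apply the chain rule one last time at the level of the full down operator, which gives $\E_{T \sim \nu D_{k \to \ell}}{F(T)} = \DKL{\nu \river \mu} - \DKL{\nu D_{k \to \ell} \river \mu D_{k \to \ell}}$. Substituting this into the previous inequality and rearranging yields $\DKL{\nu D_{k \to \ell} \river \mu D_{k \to \ell}} \leq (1 - \kappa) \DKL{\nu \river \mu}$, as desired. The one delicate step is the exchangeability identification: the entire ``average-case'' improvement over a product of worst-case one-step contractions hinges on recognizing that peeling uniform elements sequentially makes the order marginally uniform given the unordered set, which is precisely what allows the harmonic mean $\gamma_T$ to appear instead of a product of worst-case link parameters along some fixed chain.
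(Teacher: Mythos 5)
Your proof is correct and takes essentially the same route as the paper: both arguments peel off one uniformly random element at a time, use the one-step hypothesis to get $\rho(T)F(T)\leq \E_e{F(T\cup\{e\})}$, and exploit the fact that the order of the peeled elements is uniform given the unordered set so that the harmonic mean $\gamma_T$ (rather than a worst-case chain) appears. The only difference is bookkeeping — you phrase everything via the KL chain rule under $\nu$, while the paper telescopes $\E{\tau_i}$ for $\tau_i=f(\nu D_{k\to i}/\mu D_{k\to i})$ under $\mu$ and tracks the submartingale $Z_i$ (a formulation that extends verbatim to other $f$-divergences) — but the two computations are equivalent.
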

The proof is similar to \cite[Theorem 46]{AASV21}, \cite[Theorem 5]{AJKPV21}, and is deferred to \Tag{\cref{sec:defer}}\Tag<ieeetran>{the full version}.
\begin{remark}
	Similar to \cite{AASV21}, if the KL-divergence is replaced by any other type of $f$-divergence, a common choice being $\chi^2$-divergence which roughly relates to the notion of spectral independence, \cref{thm:entropy contraction} still remains valid.	
\end{remark}

\subsection{Isotropic transformation}\label{sec:isotropic-transform}
\Textcite{AD20} introduced the following subdivision process that takes marginal overestimates of an arbitrary distribution $\mu \in \R^{\binom{[n]}{k}}$, and transforms sampling from $\mu$ to sampling from a distribution with nearly uniform marginals. In the following, we call $\mu'$ the \emph{isotropic transformation} of $\mu$.
 
\begin{definition} \label{def:isotropic-transformation}
Let $\mu: \binom{n}{k} \to \R_{\geq 0}$ be an arbitrary probability distribution, and assume that for some constant $c \geq 1,$ we have marginal overestimates $p_1,\dots, p_n$ of the marginals with $p_1+\dots+p_n \le K$ and $p_i\geq \P_{S\sim \mu}{i\in S}$ for all $i$.
Let $t_i:=\ceil{\frac{n}{K}p_i}$. We will create a new distribution out of $\mu$: For each $i \in [n]$, create $t_i$ copies of the element $i$ and let the collection of all these copies be the new ground set: $U = \bigcup_{i = 1}^{n} \set{i^{(1)}, \ldots, i^{(t_i)}}$. Define the following distribution $\mu': \binom{U}{k} \to \R_{\geq 0}$ from $\mu$:
\[
\mu'\parens*{\set*{i_1^{(j_1)}, \ldots, i_k^{(j_k)}}}:=\frac{\mu(\set{i_1, \ldots, i_k})}{t_1 \cdots t_k}.
\]
Another way we can think of $\mu'$ is that to produce a sample from it, we can first generate a sample $\set{i_1, \ldots, i_k}$ from $\mu$, and then choose a copy $i_m^{(j_m)}$ for each element $i_m$ uniformly at random. 
\end{definition}
As show in \cite[Proposition 24]{ADVY21}, performing the isotropic transformation in \cref{def:isotropic-transformation} at most only doubles the size of the universe $U$, but makes all marginals bounded by $O(K/n)$ now. \Tag{For the convenience of the reader, we present the proofs of the following in \cref{sec:defer}.}
\begin{proposition} \label{prop:near-isotropic} 
Let $\mu: \binom{n}{k} \to \R_{\geq 0}$, and let $\mu': \binom{U}{k} \to \R_{\geq 0}$ be the subdivided distribution from \cref{def:isotropic-transformation}. The following hold for $\mu'$:
\begin{enumerate}
\item Near-isotropy: For all $i^{(j)} \in U$, the marginal $\P_{S \sim \mu'}{i^{(j)} \in S} \leq \frac{K}{n} \leq \frac{2K}{\abs{U}}$.
\item Linear ground set size: $\card{U} \leq 2n.$
\item If $ \mu$ is strongly Rayleigh then so is $\mu'.$
\end{enumerate}
\end{proposition}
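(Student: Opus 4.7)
The statement has three independent claims, so I would address them in turn.

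For the near-isotropy claim, I would use the two-step sampling description of $\mu'$ given in \cref{def:isotropic-transformation}: to draw $S' \sim \mu'$, first draw $S \sim \mu$, and then for each $i \in S$ pick one of the copies $i^{(1)},\ldots,i^{(t_i)}$ uniformly at random. This gives
\[
\P_{S' \sim \mu'}{i^{(j)} \in S'} = \frac{\P_{S \sim \mu}{i \in S}}{t_i} \le \frac{p_i}{t_i}.
\]
Plugging in $t_i = \ceil{np_i/K} \ge np_i/K$ yields the bound $\le K/n$. The second inequality $K/n \le 2K/\card{U}$ is just a restatement of claim 2.

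For the linear ground set size claim, I would sum the ceiling using $\ceil{x} \le x + 1$:
\[
\card{U} = \sum_{i=1}^n t_i \le \sum_{i=1}^n \parens*{\tfrac{n}{K}p_i + 1} = \tfrac{n}{K}\sum_i p_i + n \le n + n = 2n,
\]
where the last step uses the hypothesis $\sum_i p_i \le K$.

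For the preservation of the strongly Rayleigh property, the plan is to exhibit $g_{\mu'}$ as a substitution of $g_\mu$ and invoke a standard real-stability closure property. Introducing one variable $w_{i,j}$ per element of $U$ and unfolding \cref{def:isotropic-transformation}, one verifies
\[
g_{\mu'}\parens*{\vec{w}} = g_\mu\parens*{\tfrac{1}{t_1}\textstyle\sum_{j} w_{1,j},\, \ldots,\, \tfrac{1}{t_n}\sum_j w_{n,j}}.
\]
I would then note that substituting each variable $z_i$ of a real-stable polynomial by a non-negative linear form in a fresh, pairwise disjoint collection of variables preserves real stability. This can be checked directly: if every $w_{i,j}$ lies in the open upper half-plane, then each of the substituted expressions $\tfrac{1}{t_i}\sum_j w_{i,j}$ lies in the open upper half-plane as well (positive coefficients), so by real stability of $g_\mu$ the result is nonzero. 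Hence $g_{\mu'}$ has no root with all coordinates in the open upper half-plane, i.e.\ it is real-stable, so $\mu'$ is strongly Rayleigh.

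The only step with any subtlety is the substitution argument for item 3, but it reduces to the elementary observation that a positive combination of points in the open upper half-plane remains in the open upper half-plane, so I do not anticipate a real obstacle. The rest is bookkeeping from the explicit definition of $\mu'$ and the hypothesis on $\sum_i p_i$.
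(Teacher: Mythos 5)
Your proof is correct and follows essentially the same route as the paper's: the marginal bound via $t_i \ge np_i/K$, the ceiling sum $\card{U} \le \sum_i (1 + np_i/K) \le 2n$, and real-stability via the substitution $z_i \mapsto \frac{1}{t_i}\sum_j z_i^{(j)}$ together with the observation that positive combinations of upper half-plane points stay in the upper half-plane. No gaps.
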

    \section{Entropy contraction}
\label{sec:entropy}
Our goal is to prove an entropy contraction inequality for the $(n-k)\to1$ down operator. For strongly Rayleigh distributions $\bmu \in \R^{\binom{[n]}{k}}$, which are $1$-entropically independent (\cref{def:entropic-independence}), the entropy contracts by $1/(n-k)$. Surprisingly, if $\mu$ also has nearly uniform marginals, the entropy contracts even more, by an extra $\sim \log(n/k)$ factor.

\begin{theorem}[Level one entropy contraction] \label{thm:one level contraction}
Let $\mu \in \R^{\binom{[n]}{k}}$ be a $1$-entropically independent distribution with $p(\mu)^{\max} : = \max_{i\in [n]} \P_{F\sim\mu}{i\in F} \leq \frac{1}{100}$.
Then for any distribution $\bnu \subseteq \R^{\binom{[n]}{n-k}}$,
\Tag{
	\[\DKL{\bnu D_{(n-k)\to1} \river  \bmu D_{(n-k)\to1}} \leq \frac{1}{(n-k) \log ((e p(\mu)^{\max})^{-1})  } \DKL{\bnu \river \bmu}. \]
}
\Tag<ieeetran>{
	\begin{multline*}
		\DKL{\bnu D_{(n-k)\to1} \river  \bmu D_{(n-k)\to1}} \leq\\ \frac{1}{(n-k) \log ((e p(\mu)^{\max})^{-1})  } \DKL{\bnu \river \bmu}.
	\end{multline*}
}
\end{theorem}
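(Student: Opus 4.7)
The plan is to translate the statement into an inequality between scalar marginals and then verify that inequality pointwise. First, since $\bmu$ has marginals $1-p(\mu)_i$ (the complements of $\mu$'s marginals), the definition of $D_{(n-k)\to1}$ gives $\bmu D_{(n-k)\to1}(i) = \frac{1-p(\mu)_i}{n-k}$ and $\bnu D_{(n-k)\to1}(i) = \frac{1-p(\nu)_i}{n-k}$, while $\DKL{\bnu\river\bmu} = \DKL{\nu\river\mu}$ because complementation is a bijection on subsets. Writing $a_i := p(\nu)_i$, $b_i := p(\mu)_i$, and $L := \log(1/(e\,p(\mu)^{\max}))$, the desired conclusion rearranges to
\[
\sum_i (1-a_i)\log\frac{1-a_i}{1-b_i} \;\leq\; \frac{1}{L}\,\DKL{\nu\river\mu}.
\]
By \cref{def:entropic-independence} applied to $D_{k\to1}$ (which is just the aggregation of marginals), the $1$-entropic independence of $\mu$ gives $\sum_i a_i\log(a_i/b_i)\leq\DKL{\nu\river\mu}$. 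Hence it suffices to prove the purely analytic inequality
\[
L\sum_i (1-a_i)\log\frac{1-a_i}{1-b_i} \;\leq\; \sum_i a_i\log\frac{a_i}{b_i}
\]
for $a_i\in[0,1]$, $b_i\in(0,b^{\max}]$ with $\sum_i a_i=\sum_i b_i=k$ and $b^{\max}\leq 1/100$.

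The strategy is to prove this through a \emph{pointwise} comparison with a linear correction that sums to zero:
\[
L(1-a)\log\frac{1-a}{1-b} \;\leq\; a\log\frac{a}{b} + (L+1)(b-a), \qquad a\in[0,1],\ b\in(0,b^{\max}].
\]
Summing over $i$, the correction $(L+1)(b_i-a_i)$ vanishes because $\sum_i(b_i-a_i)=0$, and the desired bound follows. The coefficient $L+1$ is not ad hoc: it is forced by requiring the potential
\[
\Phi(a) := a\log\frac{a}{b} + (L+1)(b-a) - L(1-a)\log\frac{1-a}{1-b}
\]
to satisfy $\Phi(b)=\Phi'(b)=0$, so that the pointwise inequality is tight at $a=b$.

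I expect the main obstacle to be establishing $\Phi\geq 0$ globally on $[0,1]$. A direct computation gives $\Phi''(a) = 1/a - L/(1-a)$, which changes sign at $a_* := 1/(L+1)$: the function $\Phi$ is convex on $[0,a_*]$ and concave on $[a_*,1]$. The smallness hypothesis $p(\mu)^{\max}\leq 1/100$ is used precisely to ensure $b^{\max}\log(1/b^{\max})<1$, equivalently $b\leq b^{\max}<a_*$, so that $b$ sits in the convex region. On $[0,a_*]$, the tangent-line inequality at $b$ combined with $\Phi(b)=\Phi'(b)=0$ yields $\Phi\geq 0$, and in particular $\Phi(a_*)\geq 0$. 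On $[a_*,1]$, concavity reduces the problem to checking the two endpoints: the left endpoint $\Phi(a_*)\geq 0$ is already known, and at the right endpoint $\Phi(1)=\log(1/b)-(L+1)(1-b)\geq (L+1)b\geq 0$, using that $\log(1/b)\geq \log(1/b^{\max})=L+1$ (since $b\leq b^{\max}$). Concavity then propagates nonnegativity across $[a_*,1]$, and patching the two regimes completes the proof.
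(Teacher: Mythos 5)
Your proof is correct and follows essentially the same route as the paper: reduce to the marginal vectors via $1$-entropic independence of $\mu$, then prove a pointwise scalar inequality with a linear correction term $(L+1)(b-a)$ that telescopes to zero under $\sum_i a_i=\sum_i b_i=k$, chosen so the inequality is tight to second order at $a=b$, and verified by a sign analysis of the second derivative (this is exactly the paper's Lemma~\ref{lemma:tangentline} and Lemma~\ref{lemma:compare_KL} in a cleaner one-variable parametrization, with the same constant since $K_{\alpha,\mu}$ there equals $(L+1)n/((n-k)L)$ after rescaling). The convex-piece tangent-line plus concave-piece chord argument is a valid substitute for the paper's monotonicity analysis, and the hypothesis $p(\mu)^{\max}\le 1/100$ is used in the same place, to put $b$ in the convex region.
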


We show this theorem by directly comparing the relative entropies of $p = \nu D_{k\to1}$ and $q = \bnu D_{(n-k)\to1}$ with respect to $\mu D_{k\to1}$ and $\bmu D_{(n-k)\to1}$ respectively. We first show a single-variable instance of this, which we sum over to get the overall comparison in \cref{lemma:compare_KL}.
\begin{lemma}
\label{lemma:tangentline}
Let $p, q, \alpha \in \R_{\ge0}$ be such that $\alpha p + (1-\alpha)q = 1$. If $\alpha < 1/100$ then for any $\mu \in (0, 1/100)$,
\[ C_{\alpha,\mu} p \log(\alpha p/\mu) - q \log((1-\alpha)q/(1-\mu)) \ge K_{\alpha,\mu}\parens*{\alpha p - \mu} \]
for any constants $C_{\alpha, \mu} \ge \frac{\alpha}{(1-\alpha)\log(1/(e\mu))}$ and $K_{\alpha, \mu} := C_{\alpha,\mu}/\alpha + (1-\alpha)^{-1}$.
\end{lemma}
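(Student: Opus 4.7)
The plan is to treat this as a tangent-line inequality. Substituting $x := \alpha p$ (so that $(1-\alpha) q = 1-x$ by the constraint $\alpha p + (1-\alpha)q = 1$), the claim becomes
\[
    g(x) := \frac{C_{\alpha,\mu}}{\alpha}\, x\log\frac{x}{\mu} \;-\; \frac{1-x}{1-\alpha}\log\frac{1-x}{1-\mu} \;-\; K_{\alpha,\mu}(x-\mu) \;\geq\; 0 \quad \text{for } x \in [0,1].
\]
A useful initial observation is that $g$ is affine in $C_{\alpha,\mu}$ with coefficient $\frac{1}{\alpha}\bigl(x\log(x/\mu) - (x-\mu)\bigr) \geq 0$, by the elementary inequality $y\log y \geq y-1$. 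So it suffices to handle the extremal case $C_{\alpha,\mu} = \alpha/((1-\alpha)\log(1/(e\mu)))$.

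Next I would show that $x=\mu$ is a critical point matching the RHS: a direct computation gives $g(\mu) = 0$ and, using the definition $K_{\alpha,\mu} = C_{\alpha,\mu}/\alpha + 1/(1-\alpha)$, also $g'(\mu) = 0$. The second derivative
\[
    g''(x) = \frac{C_{\alpha,\mu}}{\alpha\, x} - \frac{1}{(1-\alpha)(1-x)}
\]
has a single zero at $x^* = C_{\alpha,\mu}(1-\alpha)/\bigl(C_{\alpha,\mu}(1-\alpha)+\alpha\bigr)$, and is positive on $[0,x^*)$, negative on $(x^*,1]$. Under the hypotheses $\alpha,\mu < 1/100$, the quantity $\log(1/(e\mu))$ is bounded below by a positive constant, from which one verifies $x^* > \mu$; hence $g''(\mu) > 0$ and $x=\mu$ is a local minimum of $g$.

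Since $g'(0^+) = g'(1^-) = -\infty$ (the respective log terms dominate) and $g'$ is unimodal (increasing up to $x^*$, decreasing afterwards), $g'$ has exactly two zeros in $(0,1)$: the known one at $\mu$, and a second one at some $x^{**} \in (x^*, 1)$. Consequently $g$ is decreasing on $[0,\mu]$, increasing on $[\mu, x^{**}]$, and decreasing on $[x^{**}, 1]$. Combined with $g(\mu) = 0$, this already gives $g \geq 0$ on $[0, x^{**}]$, and on $[x^{**}, 1]$ the infimum is attained at the endpoint $x = 1$. So the whole inequality reduces to checking $g(1) \geq 0$.

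This endpoint verification is the main step, and is where the precise constant in the hypothesis enters. Plugging $x=1$ into $g$ (using the convention $0 \log 0 = 0$) and expanding $K_{\alpha,\mu}$ gives
\[
    g(1) \;=\; \frac{C_{\alpha,\mu}}{\alpha}\bigl[\log(1/\mu) - (1-\mu)\bigr] \;-\; \frac{1-\mu}{1-\alpha}.
\]
Applying the elementary bound $\log(1/\mu) - (1-\mu) \geq \log(1/\mu) - 1 = \log(1/(e\mu))$ and $1-\mu \leq 1$, the hypothesis $C_{\alpha,\mu} \geq \alpha/((1-\alpha)\log(1/(e\mu)))$ is exactly what is needed to force $g(1) \geq 0$, completing the proof. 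The delicate part is that the bound on $C_{\alpha,\mu}$ must be just tight enough to handle this endpoint while still being small enough to yield the improved mixing bound in \cref{thm:one level contraction} downstream.
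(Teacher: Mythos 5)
Your proof is correct and follows essentially the same route as the paper's: both reduce to a single-variable function (you use $x=\alpha p$, the paper works with $q$), locate the critical point at $x=\mu$ via the choice of $K_{\alpha,\mu}$, and use the single sign change of the second derivative to pin down the decrease/increase/decrease pattern. The one substantive difference is that you explicitly verify the endpoint $g(1)\ge 0$ (equivalently $f(0)\ge 0$ in the paper's parametrization) — a boundary case that the paper's monotonicity argument passes over silently even though the shape of $f$ alone only gives $f\ge\min\set{f(0),f(\bar q)}$, and which is exactly where the lower bound on $C_{\alpha,\mu}$ is needed a second time — so your write-up is, if anything, slightly more complete.
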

We explain some intuition behind this claim. First, both sides vanish when $p = \mu/\alpha$ and $q = (1-\mu)/(1-\alpha)$. The constants $C_{\alpha,\mu}$ and $K_{\alpha,\mu}$ are chosen so that the inequality is tight up to the second order at this point where $p = \mu/\alpha$ and $q = (1-\mu)/(1-\alpha)$.
\begin{proof}[Proof of \cref{lemma:tangentline}]
Define $f(q) := C_{\alpha,\mu} p \log(\alpha p/\mu) - q \log((1-\alpha)q/(1-\mu)) - K_{\alpha,\mu}\parens*{\alpha p - \mu}$. This is defined for $q \in [0, 1/(1-\alpha)]$.
Note that $\frac{d}{dq} p = -\frac{1-\alpha}{\alpha}.$ Hence
\Tag{
	\[ f'(q) = -C_{\alpha,\mu} \cdot \frac{1-\alpha}{\alpha} \parens*{\log(\alpha p/\mu) + 1} - \parens*{\log\parens*{\frac{(1-\alpha)q}{1-\mu}} + 1} + K_{\alpha,\mu}(1-\alpha). \]
}
\Tag<ieeetran>{
	\begin{multline*} f'(q) = -C_{\alpha,\mu} \cdot \frac{1-\alpha}{\alpha} \parens*{\log(\alpha p/\mu) + 1} -\\ \parens*{\log\parens*{\frac{(1-\alpha)q}{1-\mu}} + 1} + K_{\alpha,\mu}(1-\alpha). \end{multline*}
}
By our careful choice of $K_{\alpha,\mu}$, we have $f'(\bar{q}) = 0$ for $\bar{q} = \frac{1-\mu}{1-\alpha}$.
Additionally, we can calculate
\begin{align}
\label{eq:secondderiv}
    f''(q) = p^{-1} q^{-1} \parens*{C_{\alpha,\mu}\parens*{\frac{1-\alpha}{\alpha}}^2 q -  \frac{1 - (1-\alpha) q }{\alpha}}.
\end{align}
Note that the $f''(q) = 0$ at exactly one value of $q$, which we denote by $q_2.$ Observe that $f''(q) \geq 0$ if and only if $q \geq q_2$, because the coefficient of $q$ in \cref{eq:secondderiv} is positive. Hence
\begin{align*} f''(\bar{q}) &= p^{-1} \bar{q}^{-1} \parens*{C_{\alpha,\mu}\parens*{\frac{1-\alpha}{\alpha}}^2 \bar{q} -  \frac{1 - (1-\alpha) \bar{q} }{\alpha}}\\
\Tag{&\geq p^{-1} \bar{q}^{-1}\parens*{ \frac{\alpha}{(1-\alpha)\log(1/(e\mu))}\parens*{\frac{1-\alpha}{\alpha}}^2 \frac{1-\mu}{1-\alpha} -  \frac{1 - (1-\alpha) \frac{1-\mu}{1-\alpha}  }{\alpha}}  \\}
&\Tag{=}\Tag<ieeetran>{\geq}  p^{-1} \bar{q}^{-1}\parens*{ \frac{1}{\log(1/(e\mu))} \cdot \frac{1-\mu}{\alpha} -  \frac{\mu  }{\alpha}}\\
&= p^{-1} \bar{q}^{-1} \frac{1}{\alpha\log(1/(e\mu))} \parens*{1- \mu (1+ \log(1/(e\mu))) }\\
&= p^{-1} \bar{q}^{-1} \frac{1}{\alpha\log(1/(e\mu))} \parens*{1 + \mu \log \mu } \geq 0
\end{align*}
because $\mu \log \mu \ge -1$ for $\mu \le 1/100$.

 Note that $f'(0) = f'(1/(1-\alpha)) = +\infty.$ Recall that $f''(q) < 0$ for any $q < q_2$ and $f''(q) > 0$ for $q > q_2.$ The above calculation implies $\bar{q}\geq q_2.$ 
 Thus $0=f'(\bar{q}) \geq f'(q_2) $ and $f'(q) \geq f'(\bar{q}) =0$ for $q \geq \bar{q}.$ Since $f'$ is decreasing in $[0,q_2],$ there is no $q_1 \in (0, q_2)$ such that $f'(q_1) =0.$ Thus $f$ must increase in $[0,q_1]$ for $q_1 < q_2 < \bar{q}$, decrease in $[q_1, \bar{q}]$, and increase on $[\bar{q}, 1/(1-\alpha)].$ In particular, $f(q) \geq f(\bar{q}) $ for all $q \in [0,1].$ Since $f(\bar{q}) = 0$, we get the desired inequality.
\end{proof}
\begin{lemma}
\label{lemma:compare_KL}
Let $p, q$ be distributions on $[n]$ satisfying $\alpha p + (1-\alpha)q = \vec{1}/n$ for $\alpha = k/n$ and $\alpha < 1/100$. Then for any $\mu \in \R_{[0,1]}^n$ with $\norm{\mu}_1 = k$ and $\mu^{\max} := \max_{i\in[n]} \mu_i < 1/100$,
\begin{align}
\DKL*{q \river \frac{\vec{1} - \mu}{n-k}} \le C_{\alpha,\mu^{\max}} \DKL*{p \river \frac{ \mu}{k}}
\end{align}
with $C_{\alpha, \mu^{\max}} := \frac{\alpha}{(1-\alpha)\log(1/(e\mu^{\max}))}$ as in \cref{lemma:tangentline}.
\end{lemma}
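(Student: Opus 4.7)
The plan is to reduce the inequality to a coordinate-wise application of \cref{lemma:tangentline} and sum. The constraint $\alpha p + (1-\alpha) q = \vec{1}/n$ rewritten coordinate-wise says $\alpha(n p_i) + (1-\alpha)(n q_i) = 1$, so for each $i$ the triple $(n p_i, n q_i, \mu_i)$ fits the hypothesis of \cref{lemma:tangentline} (the global assumption $\alpha < 1/100$ is needed here). Since $\mu \mapsto \frac{\alpha}{(1-\alpha)\log(1/(e\mu))}$ is increasing on $(0, 1/e)$ and $\mu_i \le \mu^{\max} < 1/100$, I can use a single uniform constant $C := C_{\alpha, \mu^{\max}}$ in the pointwise bound for every $i$, with companion $K := C/\alpha + (1-\alpha)^{-1}$ independent of $i$.

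Invoking \cref{lemma:tangentline} coordinate-wise yields
\[ C \cdot (n p_i)\log\!\parens*{\frac{\alpha n p_i}{\mu_i}} - (n q_i)\log\!\parens*{\frac{(1-\alpha) n q_i}{1 - \mu_i}} \ge K(\alpha n p_i - \mu_i). \]
Dividing by $n$ and using $\alpha n = k$, $(1-\alpha) n = n-k$, the left side becomes exactly the $i$-th contribution to $C\, p_i \log(p_i/(\mu_i/k)) - q_i \log(q_i/((1-\mu_i)/(n-k)))$, i.e., to $C\DKL{p \river \mu/k} - \DKL{q \river (\vec{1}-\mu)/(n-k)}$. Next I would sum over $i \in [n]$: the right-hand side telescopes to $K\parens*{\alpha \sum_i p_i - \tfrac{1}{n}\sum_i \mu_i} = K(\alpha - k/n) = 0$ because $\sum_i p_i = 1$ and $\sum_i \mu_i = k$. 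What remains is exactly the desired inequality.

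There is no real obstacle here; the heavy lifting was done in \cref{lemma:tangentline}, whose linear term $K(\alpha p - \mu)$ was calibrated precisely so that summation against distributions with matched totals kills it. The only subtle point is the uniform-in-$i$ choice of $C$, which relies on $\mu^{\max} < 1/100$ so that the per-coordinate constants $C_{\alpha,\mu_i}$ from \cref{lemma:tangentline} can all be dominated by the single $C_{\alpha,\mu^{\max}}$; the case $\mu_i = 0$ (if present) is handled separately by absolute continuity, forcing $p_i = 0$ there and making that term drop out.
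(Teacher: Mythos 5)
Your proposal is correct and follows essentially the same route as the paper: apply \cref{lemma:tangentline} coordinate-wise with $p = np_i$, $q = nq_i$, use monotonicity of $C_{\alpha,\mu}$ in $\mu$ to fix the uniform constant $C_{\alpha,\mu^{\max}}$, and sum so that the calibrated linear term vanishes because $\sum_i p_i = 1$ and $\sum_i \mu_i = k$. The remark about $\mu_i = 0$ is a minor point the paper leaves implicit, but nothing of substance differs.
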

\begin{proof}
By \cref{lemma:tangentline} (for the choice $p = np_i$ and $q = nq_i$) and the fact that $C_{\alpha,\mu}$ is monotonically increasing in $\mu$, we deduce that
\Tag{
	\[ C_{\alpha,\mu^{\max}} p_i \log(p_ik/\mu_i) - q_i \log(q_i(n-k)/(1-\mu_i)) \ge \frac{1}{n}K_{\alpha,\mu^{\max}}(\alpha np_i - \mu_i). \]
}
\Tag<ieeetran>{
	\begin{multline*} C_{\alpha,\mu^{\max}} p_i \log(p_ik/\mu_i) - q_i \log(q_i(n-k)/(1-\mu_i)) \geq\\ \frac{1}{n}K_{\alpha,\mu^{\max}}(\alpha np_i - \mu_i). \end{multline*}
}
Summing this over all $i$ gives us
\Tag{
	\[ C_{\alpha,\mu^{\max}} \sum_{i\in[n]} p_i \log(p_ik/\mu_i) - \sum_{i\in[n]} q_i \log(q_i(n-k)/(1-\mu_i)) \ge \frac{1}{n}K_{\alpha,\mu^{\max}}\sum_{i\in[n]} (\alpha np_i - \mu_i). \]
}
\Tag<ieeetran>{
	\begin{multline*} C_{\alpha,\mu^{\max}} \sum_{i\in[n]} p_i \log(p_ik/\mu_i) -\\ \sum_{i\in[n]} q_i \log(q_i(n-k)/(1-\mu_i)) \ge \frac{1}{n}K_{\alpha,\mu^{\max}}\sum_{i\in[n]} (\alpha np_i - \mu_i). \end{multline*}
}
The r.h.s.\ equals $0$, so we deduce the desired inequality.
\end{proof}
Now, \cref{thm:one level contraction} is an easy corollary of \cref{lemma:compare_KL}.

\begin{proof}[Proof of \cref{thm:one level contraction}]
Let $\nu, \mu$ be the complement of $\bnu, \bmu$ resp.
Let $p = \nu D_{k \to 1} , q = \bnu D_{(n-k) \to 1}, $ and $\widehat{\mu}_i := \P_{F \sim \mu} {i \in F} $ for $i\in [n]$ in the setting of \cref{lemma:compare_KL}. Note that $\alpha p + (1-\alpha) q = \vec{1}/n$ for $\alpha = k/n,$ and $\widehat{\mu}^{\max} = p(\mu)^{\max} \leq \frac{1}{100}.$

Because $\mu$ is $1$-entropically independent (\cref{lemma:1entropic}),
\Tag{
	\[ \DKL*{p \river \frac{\widehat{\mu}}{k}} = \DKL*{\nu D_{k\to 1} \river \mu D_{k\to 1}} \le \frac{1}{k} \DKL{\nu \river \mu} = \frac{1}{k} \DKL{\bnu \river \bmu}. \]
}
\Tag<ieeetran>{
	\begin{multline*} \DKL*{p \river \frac{\widehat{\mu}}{k}} = \DKL*{\nu D_{k\to 1} \river \mu D_{k\to 1}} \le\\ \frac{1}{k} \DKL{\nu \river \mu} = \frac{1}{k} \DKL{\bnu \river \bmu}. \end{multline*}
}
Combining this with \cref{lemma:compare_KL} for $\alpha = k/n$ gives us
\begin{align*} \DKL*{q \river \frac{\vec{1}-\widehat{\mu}}{n-k}} &\le C_{\alpha,\widehat{\mu}^{\max}} \DKL*{p \river \frac{\widehat{\mu}}{k}}\\ &\le \frac{\frac{\alpha}{k} \DKL{\bnu \river \bmu}}{(1-\alpha)\log((ep(\mu)^{\max})^{-1})} \\ &= \frac{\DKL{\bnu \river \bmu}}{(n-k) \log ((e p(\mu)^{\max})^{-1})  }.\qedhere \end{align*}
\end{proof}

We can almost directly combine \cref{thm:one level contraction} and the average-case local-to-global principle \cref{thm:entropy contraction} to deduce an entropy contraction for $D_{(n-k)\to(n-k'+1)}$ and $D_{(n-k)\to(n-k-1)}$. The one remaining issue is that the local-to-global theorem requires that the marginals of \emph{conditionals of $\mu$} also have almost uniform marginals. This is the main result of \cref{sec:concentration}, which we state here.
\begin{theorem}
\label{thm:concentration}
Let $\mu \in \R^{\binom{[n]}{k}}$ be a strongly Rayleigh distribution, and let $T \subseteq [n]$ with $\card{T} = \bar{k}.$ For a sufficiently large constant $C$ and any $s \ge C(np(\mu)^{\max}+\bar{k})\log n$, we have
\[ \P*_{\substack{S \sim [n]\backslash T \\ \card{S} = n-s}}{ p(\mu_{[n]\backslash S})^{\max} \ge \frac{2p(\mu)^{\max} n}{s}} \le n^{-10}. \]
\end{theorem}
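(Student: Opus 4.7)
The plan is to reduce the statement to a single-element concentration bound via the union bound: for each fixed $i \in [n] \setminus T$ I aim to establish $\P{p(\mu_{[n] \setminus S})_i \geq 2 p(\mu)^{\max} n/s} \leq n^{-11}$, and then take a union bound over the at most $n$ choices of $i$. Fixing such an $i$, I would realize $S$ by a uniformly random permutation $\pi$ of $[n] \setminus T$ and define the shrinking chain $R_j := [n] \setminus \{\pi_1, \ldots, \pi_j\}$ for $j = 0, 1, \ldots, n-s$, so that $R_0 = [n]$ and $R_{n-s} = [n] \setminus S$. The quantity of interest is $Y_j := p(\mu_{R_j})_i$; by the negative association of strongly Rayleigh distributions, restriction can only increase the marginal of $i$, so $(Y_j)$ is a submartingale with respect to the natural filtration.

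The heart of the argument is a quantitative control of the expected multiplicative drift together with the per-step increment. Starting from the elementary identity
\[ p(\mu_{R \setminus \{e\}})_i - p(\mu_R)_i = \frac{-\mathrm{Cov}_{\mu_R}(\1_{i \in F}, \1_{e \in F})}{1 - p(\mu_R)_e}, \]
the negative-correlation bound $|\mathrm{Cov}_{\mu_R}(\1_{i \in F}, \1_{e \in F})| \leq p(\mu_R)_i \, p(\mu_R)_e$, and the telescoping identity $\sum_{e \in R \setminus \{i\}} |\mathrm{Cov}_{\mu_R}(\1_{i \in F}, \1_{e \in F})| = p(\mu_R)_i (1 - p(\mu_R)_i)$ (coming from $\sum_{e \in R} \1_{e \in F} = k$ under $\mu_R$), one obtains the conditional bounds $\E{Y_j/Y_{j-1} \mid R_{j-1}} \leq 1 + (1 + o(1))/(n-j)$ and $Y_j/Y_{j-1} \leq 1 + O(p(\mu_{R_{j-1}})^{\max})$ pointwise, valid as long as the running max marginal is bounded away from $1$. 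Iterating the drift across the process gives $\E{Y_{n-s}} \leq (1 + o(1)) p(\mu)^{\max} \cdot n/s$, matching the target up to a factor of $2$. To promote this expectation bound into a tail bound, I would apply Freedman's inequality to the compensated martingale $Y_j - A_j$, where $A_j$ is the predictable compensator. The per-step conditional variance is $O(p(\mu_{R_{j-1}})^{\max} \cdot Y_{j-1}^2/(n-j))$ and the pointwise increment is $O(p(\mu_{R_{j-1}})^{\max} \cdot Y_{j-1})$; summing yields a total predictable quadratic variation small enough that Freedman delivers $\P{Y_{n-s} \geq 2 \E{Y_{n-s}}} \leq n^{-11}$ whenever $s \geq C(n p(\mu)^{\max} + \bar{k}) \log n$ with $C$ sufficiently large, and a final union bound over $i \in [n] \setminus T$ gives the claimed $n^{-10}$.

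The main technical obstacle is that both the drift analysis and the Freedman tail bound rely on bounds for $p(\mu_{R_{j-1}})^{\max}$, which is precisely the quantity we are trying to control, so the argument is self-referential. I would close the loop with a stopping-time argument: introduce $\tau := \min\{j : p(\mu_{R_j})^{\max} > 2 p(\mu)^{\max} \cdot n/(n-j)\}$, apply the Freedman analysis to the stopped submartingale $Y_{j \wedge \tau}$, on which all drift and increment bounds hold by construction, and then use a parallel argument—for instance a union bound over dyadic scales $n - j \in \{n, n/2, \ldots, s\}$, inductively applying the already-proven single-scale concentration from large to small—to show $\P{\tau \leq n-s} \leq n^{-10}$. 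The $\log n$ factor in the hypothesis $s \geq C(n p(\mu)^{\max} + \bar{k}) \log n$ is precisely the slack required to pay for this cascade of union bounds (over elements $i$ and over dyadic scales of $j$) while keeping the final failure probability at $n^{-10}$.
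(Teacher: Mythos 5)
Your proposal matches the paper's proof in all essentials: the same element-removal chain tracking $p(\mu_{R_j})_i$ for a fixed $i$, the same covariance identity (zero row sums of the covariance matrix for homogeneous distributions) combined with negative correlation to control the drift, pointwise increments, and conditional variance, a Freedman/Bernstein martingale inequality, and a stopping-time device to break the circular dependence on the running $p(\cdot)^{\max}$ before union-bounding over elements and times. The only cosmetic differences are that the paper applies the martingale inequality to $\log p(\mu^{(t)})_i$ rather than to the additively compensated marginal (which makes the increments uniformly $O(1/\log n)$ and the bookkeeping cleaner), and it controls the stopping time by a union bound over all $n-s$ times rather than a dyadic cascade.
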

We prove this in \cref{sec:concentration}. We now have all the pieces to show our main technical result.

\begin{theorem}\label{thm:main technical}
Let $\mu: \binom{[n]}{k} \to \R_{\geq 0}$ be a strongly Rayleigh distribution with $p(\mu)^{\max}  \leq 1/500$. Let $s: = C(n p(\mu)^{\max} + \bar{k}) \log n$ for $C$ be as in \cref{thm:concentration} and $k' = \Theta(np(\mu)^{\max}).$
Then for any distribution $\bnu \subseteq \R^{\binom{[n]}{n-k}}$ and $\bar{k} \geq k+2$
\Tag{
	\[\DKL{\bnu D_{(n-k)\to (n-\bar{k}+1 )} \river \bmu D_{(n-k)\to (n-\bar{k}+1)}} \leq (1 - \kappa) \DKL{\bnu \river \bmu}\]
}
\Tag<ieeetran>{
	\begin{multline*}\DKL{\bnu D_{(n-k)\to (n-\bar{k}+1 )} \river \bmu D_{(n-k)\to (n-\bar{k}+1)}} \leq\\ (1 - \kappa) \DKL{\bnu \river \bmu}\end{multline*}
}
with $\kappa = \frac{\bar{k}-k-1}{2s \log n }.$ In particular,
\Tag{
	\[\DKL{\bnu D_{(n-k)\to (n-k'+1 )} \river  \bmu D_{(n-k)\to (n-k'+1)}} \leq (1 - \kappa_1) \DKL{\bnu \river \bmu} \] 
}
\Tag<ieeetran>{
	\begin{multline*}\DKL{\bnu D_{(n-k)\to (n-k'+1 )} \river  \bmu D_{(n-k)\to (n-k'+1)}} \leq\\ (1 - \kappa_1) \DKL{\bnu \river \bmu} \end{multline*}
}
and
\Tag{
	\[\DKL {\bnu D_{(n-k)\to (n-k-1)} \river  \bmu D_{(n-k)\to (n-k-1)}} \leq (1 - \kappa_2) \DKL{\bnu \river \bmu} \]
}
\Tag<ieeetran>{
	\begin{multline*}\DKL {\bnu D_{(n-k)\to (n-k-1)} \river  \bmu D_{(n-k)\to (n-k-1)}} \leq\\ (1 - \kappa_2) \DKL{\bnu \river \bmu} \end{multline*}
}
with $\kappa_1^{-1} = O(\log^2 n)$ and $\kappa_2^{-1} = O(np(\mu)^{\max} \log^2 n).$
\end{theorem}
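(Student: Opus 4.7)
The plan is to apply the average-case local-to-global theorem (\cref{thm:entropy contraction}) to the complement distribution $\bmu$ with the operator $D_{(n-k)\to(n-\bar k + 1)}$; in the notation of that theorem the role of ``$k$'' is played by $n-k$ and ``$\ell$'' by $n-\bar k+1$, and the hypothesis $\bar k\ge k+2$ supplies the required $\ell\le(n-k)-1$. The two ``in particular'' consequences for $\kappa_1$ and $\kappa_2$ follow by specializing $\bar k = k'$ and $\bar k = k+2$, together with $k' = \Theta(np(\mu)^{\max})$ and $k \le np(\mu)^{\max}$ (the latter because $\sum_i p(\mu)_i = k$).

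For each $T'\subseteq [n]$ with $|T'|\le n-\bar k$, the link $\bmu^{-T'}$ is the complement of the restriction $\mu_{[n]\setminus T'}$, which remains strongly Rayleigh and hence $1$-entropically independent by \cref{lemma:1entropic}. Applying \cref{thm:one level contraction} to $\mu_{[n]\setminus T'}$ yields, whenever $p(\mu_{[n]\setminus T'})^{\max}\le 1/100$,
\[
1-\rho(T')\le \frac{1}{(n-|T'|-k)\log((e\,p(\mu_{[n]\setminus T'})^{\max})^{-1})};
\]
otherwise I retreat to the vanilla bound $1-\rho(T')\le 1/(n-|T'|-k)$ implied by $1$-entropic independence alone.

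To deploy the isotropy-boosted bound along the average-case chain, I fix $T\in\binom{[n]}{n-\bar k+1}$ and draw a uniformly random permutation $(e_1,\dots,e_\ell)$ of $T$ with prefixes $T^{(j)}:=\{e_1,\dots,e_j\}$. For each $j\le n-s$, I apply \cref{thm:concentration} with its fixed set taken to be $[n]\setminus T$ (of size $\bar k-1$) and with the random set $T^{(j)}$ (a uniformly random $j$-subset of $T$) playing the role of its ``$S$''. This yields $p(\mu_{[n]\setminus T^{(j)}})^{\max}\le 2p(\mu)^{\max}n/(n-j)$ except on an event of probability $n^{-10}$; a union bound over the $O(n)$ values of $j$ leaves simultaneous control with failure probability $\le n^{-9}$. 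Since $s\ge Cnp(\mu)^{\max}\log n$ for a large enough $C$, this upper bound stays well below $1/100$ for all $j\le n-s$, so the isotropy-boosted estimate from the previous paragraph applies throughout the good range.

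Finally I split the product $\prod_{j=0}^{\ell-1}\rho(T^{(j)})^{-1}$ into a ``good'' range $j\le n-s$ and a ``bad'' range $j>n-s$. The bad range, bounded by the vanilla estimate $\rho(T^{(j)})^{-1}\le (n-j-k)/(n-j-k-1)$, telescopes cleanly into $(s-k-1)/(\bar k-k-1)$. On the high-probability event, in the good range I plug the concentration bound into the isotropy-boosted estimate and use $\rho^{-1}\le\exp(2(1-\rho))$ to reduce the contribution to $\exp(O(\sum_{j\le n-s}\frac{1}{(n-j-k)\log((n-j)/(2enp(\mu)^{\max}))}))$; an $\int du/(u\log(u/A))$ comparison with $A=2enp(\mu)^{\max}$ bounds the inner sum, contributing only a $\log n$ multiplicative factor. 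The rare exceptional event contributes at most $\operatorname{poly}(n)\cdot n^{-9}$, which is negligible. Combining, $\gamma_T^{-1}\le O(\log n)\cdot(s-k-1)/(\bar k-k-1)$ uniformly in $T$, so $\kappa=\min_T\gamma_T\ge(\bar k-k-1)/(2s\log n)$, as claimed. The main obstacle I expect is the delicate integral accounting in this last step: the isotropy-boosted $\log$ factor in the denominator of $1-\rho$ must exactly offset what would otherwise be a full harmonic $\log n$ blowup in the good-range sum, leaving only the single surviving $\log n$ in the final $\kappa$.
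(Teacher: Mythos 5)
Your proposal follows essentially the same route as the paper's proof: apply the average-case local-to-global theorem (\cref{thm:entropy contraction}) to $\bmu$, control each link via \cref{thm:one level contraction} (falling back on $1$-entropic independence when isotropy fails), control the marginals of the random restrictions via \cref{thm:concentration} plus a union bound over prefixes, split the permutation at $n-s$ with telescoping in the tail and an $\int du/(u\log(u/A))$ comparison in the head, and absorb the $n^{-9}$ exceptional event using the trivial $\operatorname{poly}(n)$ worst-case bound. The one quantitative point — which you correctly flag as the delicate step — is that with $\rho^{-1}\le\exp(2(1-\rho))$ and the crude replacement of $n-j-k$ by $n-j$ you would obtain $\bigl(\log(n/A)/\log(s/A)\bigr)^{O(1)}$ rather than a single $\log n$ factor (note $\log(s/A)$ can be as small as $\Theta(\log\log n)$, so higher powers genuinely cost extra logs and would weaken $\kappa_1^{-1}=O(\log^2 n)$ to a larger polylog); the paper avoids this by using $1-x\ge e^{-x-x^2}$ so the leading coefficient in the exponent is exactly $1$, and by writing $\frac{1}{(n-j-k)u_j}=\frac{1}{(n-j)u_j}+\frac{k}{(n-j)(n-j-k)u_j}$ with the second part summing to $O(1)$.
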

\begin{proof}
Fix $\bar{k} \geq k+2$ to be chosen later, and a set $\bar{ T} \subseteq [n]$ of size $ n-\bar{k}.$ Let $s: = C(n p(\mu)^{\max} + \bar{k}) \log n$ for $C$ be as in \cref{thm:concentration}. In the context of \cref{thm:entropy contraction}, we want to bound $\gamma_{\bar{T}}$ with respect to $\bmu$. \cref{thm:one level contraction} implies that the link of $\emptyset$ is $1/u_0$-entropically independent with $u_0 = \log ((e p(\mu)^{\max})^{-1})$. Consider a random permutation $e_1, \dots, e_{n -\bar{k}}$ of elements of $\bar{T}.$ Note that each set $S_i:= \set*{e_1, \dots, e_i}$ is a randomly sampled size-$i$ subsets of $ \bar{T}.$ By using \cref{thm:concentration} and taking a union bound over $ i\in [n-s]$, we have that except with probability $ n^{-10} \times n = n^{-9},$ we have
\[p(\mu_{[n] \setminus S_i})^{\max} \leq \frac{2 p(\mu)^{\max} n}{ \abs{[n] \setminus S_i} } \leq \frac{2p(\mu)^{\max}}{s} \le \frac{2}{C \log n } < \frac{1}{100}\]
for $C$ sufficiently large. Suppose this event holds.
Note that the complement of $\mu_{[n] \setminus S_i}$ is exactly $ \bmu^{S_i}.$ Thus, \cref{thm:one level contraction} implies  that this link is $1/u_i$-entropically independent where $u_i:= \log(\frac{n -i}{2e p(\mu)^{\max} n}). $

As a result
\begin{equation} \label{ineq:chain contraction}
    \begin{split}
        \Tag<ieeetran>{&}\rho(\emptyset) \rho(S_1) \cdots \rho(S_{n-s}) \Tag{&}\geq \prod_{i=0}^{n-s} \parens*{1 - \frac{1}{(n- k - i) u_i}} 
\\
&\overset{(i)}{\geq} \exp \parens*{- \sum_{i=0}^{n-s} \parens*{\frac{1}{(n- k - i) u_i}+ \frac{1}{(n- k - i)^2 u_i^2}} }\\
    &\overset{(ii)}{\geq}  \exp \parens*{- \sum_{i=0}^{n-s} \parens*{\frac{1}{(n- i ) u_i} + \frac{k+1}{(n- k - i)^2} } }
    \end{split}
\end{equation}
where in $(i)$ we use $1-x \geq \exp(-x -x^2)$ for $x \le 1/2$, in $(ii)$ we use
\Tag{
	\[\frac{1}{(n- k - i) u_i}  = \frac{1}{(n-i) u_i} + \frac{k}{(n-i) (n-k-i) u_i} \leq \frac{1}{(n-i) u_i} + \frac{k}{(n-k-i)^2}\]
}
\Tag<ieeetran>{
	\begin{multline*}\frac{1}{(n- k - i) u_i}  = \frac{1}{(n-i) u_i} + \frac{k}{(n-i) (n-k-i) u_i} \leq\\ \frac{1}{(n-i) u_i} + \frac{k}{(n-k-i)^2}\end{multline*}
}
Next, let $h: = 2e p(\mu)^{\max} n.$
\Tag{
	\[\sum_{i=0}^{n-s} \frac{1}{(n- i ) u_i}\leq \int_{s}^n \frac{1}{t \log \frac{t}{h} } dt = \int_{s/h}^{n/h} \frac{1}{t \log t} dt = \log\log(n/h) - \log\log(s/h).\]
}
\Tag<ieeetran>{
	\begin{multline*}\sum_{i=0}^{n-s} \frac{1}{(n- i ) u_i}\leq \int_{s}^n \frac{1}{t \log \frac{t}{h} } dt = \int_{s/h}^{n/h} \frac{1}{t \log t} dt =\\ \log\log(n/h) - \log\log(s/h).\end{multline*}
}
where in the final equality we use $ (\log\log t)' = \frac{1}{t \log t}.$ Similarly
\[
    \sum_{i=0}^{n-s} \frac{k+1}{(n- k - i)^2} \leq (k+1)\int_{s-k}^{n-k} \frac{1}{t^2 } dt =  \frac{(k+1) (n-s)}{(s-k) (n-k)  }
\]
which is $\leq O(1)$, where we use the fact that $s\geq 2 C k \log n \gg k.$ Thus, we have that the l.h.s.\ in \cref{ineq:chain contraction} $\ge \Omega(\frac{1}{\log(n/h)}).$

Moreover, for $i\in \set*{n-s+1, \cdots, n-\bar{k}}$, by 1-entropic independence of $\bmu$ and its links (\cref{lemma:1entropic}) we get the trivial bound $\rho (S_i) \geq 1-\frac{1}{n-k-i } .$ Thus
\Tag{
	\[\prod_{i = n-s+1}^{n - \bar{k}} \rho(S_i) \geq \prod_{i = n-s+1}^{n - \bar{k}} \parens*{1-\frac{1}{n-k-i }} =  \frac{\bar{k}-k -1 }{s-k -1 } \geq \frac{\bar{k}-k -1 }{s}\]
}
\Tag<ieeetran>{
	\begin{multline*}\prod_{i = n-s+1}^{n - \bar{k}} \rho(S_i) \geq \prod_{i = n-s+1}^{n - \bar{k}} \parens*{1-\frac{1}{n-k-i }} =\\  \frac{\bar{k}-k -1 }{s-k -1 } \geq \frac{\bar{k}-k -1 }{s}\end{multline*}
}
Thus, with probability at least $1-n^{-9},$ $\prod_{i=0}^{n-\bar{k}}\rho(S_i) \geq \frac{\bar{k}-k -1 }{s \log n}$. Otherwise we have a trivial lower bound of $1/n$ on the product of $\rho$s due to $1$-entropic independence of $\bmu$ and its links. Thus
\Tag{
	\[\gamma_{\bar{T}} = \E*{\prod_{i=0}^{n-\bar{k}}\rho(S_i)^{-1} }^{-1}\geq\parens*{(1-n^{-9})\cdot \frac{s \log n}{\bar{k}-k -1 }+n^{-9}\cdot n}^{-1} \geq  \frac{\bar{k}-k -1 }{2s\log n}. \]
}
\Tag<ieeetran>{
	\begin{multline*}\gamma_{\bar{T}} = \E*{\prod_{i=0}^{n-\bar{k}}\rho(S_i)^{-1} }^{-1}\geq \\ \parens*{(1-n^{-9})\cdot \frac{s \log n}{\bar{k}-k -1 }+n^{-9}\cdot n}^{-1} \geq  \frac{\bar{k}-k -1 }{2s\log n}. \end{multline*}
}
We are done with the general entropy contraction statement. Next, we prove entropy contraction for specific values of $\bar{k}.$
Plugging in $\bar{k} = k'$ and noting that for our choice of $k'$ and $s$,
\[\bar{k} - k -1 \geq \frac{1}{2}\bar{k} \geq \Omega\parens*{\frac{s}{\log n}} \]
implies the first result. Similarly, setting $\bar{k} = k+2$ gives the second result for our choice of $s$.
\end{proof}
    \section{Concentration of marginals}
\label{sec:concentration}

The goal of this section is to show concentration of marginal upper bounds (\cref{thm:concentration}) for random conditionals of strongly Rayleigh distributions. In \cref{sec:entropy}, this is applied in the context of an average-case local-to-global principle (\cref{thm:entropy contraction}) to deduce our main entropy contraction result (\cref{thm:main technical}). The proof uses the following simple observation about covariances in a set-valued distribution $\mu$ whose support contains only sets of identical size.
\begin{lemma}[Covariances of homogeneous distributions]
\label{lemma:rowsum}
For any distribution $\mu$ defined over identically-sized subsets of a ground set of elements $[n]$ and any element $i \in [n]$ we have
\[ \sum_{j \in [n]} \parens*{p(\mu)_ip(\mu)_j - \P_{F \sim \mu}{i,j\in F}} = 0. \]
\end{lemma}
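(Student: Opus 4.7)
The plan is to prove the identity by showing both terms $\sum_j p(\mu)_i p(\mu)_j$ and $\sum_j \P_{F \sim \mu}{i,j \in F}$ separately evaluate to $k \cdot p(\mu)_i$, where $k$ is the common size of sets in the support of $\mu$.

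First I would unpack the ``identically-sized'' hypothesis: let $k$ denote the common cardinality, so that $\sum_{j \in [n]} \1[j \in F] = k$ for every $F$ in the support of $\mu$. Taking expectations of this identity immediately gives $\sum_{j \in [n]} p(\mu)_j = k$. Multiplying both sides by $p(\mu)_i$ yields
\[ \sum_{j \in [n]} p(\mu)_i p(\mu)_j = k \cdot p(\mu)_i. \]

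Next, I would handle the pairwise probabilities by swapping sum and expectation:
\[ \sum_{j \in [n]} \P_{F \sim \mu}{i, j \in F} = \E_{F \sim \mu}\bracks*{\1[i \in F] \sum_{j \in [n]} \1[j \in F]} = \E_{F \sim \mu}\bracks*{\1[i \in F] \cdot k} = k \cdot p(\mu)_i, \]
where the middle equality again uses that every $F$ in the support has size exactly $k$. Subtracting the two equal quantities proves the lemma.

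There is no real obstacle here; the lemma is essentially a bookkeeping identity that captures the fact that row sums of the covariance matrix of the indicator vector $(\1[j \in F])_{j \in [n]}$ vanish whenever the total $\sum_j \1[j \in F]$ is deterministic. The only small care needed is to make sure both sides are interpreted over the full ground set $[n]$ (including $j = i$, where both terms contribute $p(\mu)_i^2$ and $p(\mu)_i$ respectively, and the identity still holds since these individual diagonal contributions are absorbed into the total $k \cdot p(\mu)_i$).
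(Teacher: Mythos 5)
Your proof is correct, and since the paper states this lemma as a ``simple observation'' without supplying its own proof, your argument---showing that both $\sum_j p(\mu)_i p(\mu)_j$ and $\sum_j \Pr_{F\sim\mu}[i,j\in F]$ equal $k\cdot p(\mu)_i$ using the deterministic cardinality $|F|=k$---is exactly the intended bookkeeping identity. Your remark about the diagonal term $j=i$ is a sensible precaution and the identity indeed holds with that convention.
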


We show \cref{thm:concentration} by analyzing the marginal of each coordinate $i \in [n]$ conditioned on it remaining in $[n] \backslash S$ via a stochastic process. Formally, fix $T$ as in \cref{thm:concentration} and a coordinate $i \in [n]$ (possibly in $T$).
\begin{definition}[Stochastic process]
\label{def:process}
For fixed $T \subseteq [n]$ with $\card{T} = \bar{k}$, $i \in [n]$, and $s \le n-\card{T}-1$,
let $\sigma$ be a random permutation of $[n] \backslash (T \cup \set{i})$. For $0 \le t \le n-s$ define $S_t = \set{\sigma(1), \sigma(2), \dots, \sigma(t)}.$ Define $S = S_{n-s}$ and $\mu^{(t)} := \mu_{[n] \backslash S_t}$.
\end{definition}
Note that $S_t$ is generated from $S_{t-1}$ by removing a random element in $[n]\backslash(T\cup\set{i}\cup S_{t-1}).$ Now we can analyze $p(\mu_{[n]\backslash S})_i = p(\mu^{(n-s)})_i$ by analyzing the stochastic process $p(\mu^{(t)})_i.$ We start by analyzing its drift.

\begin{lemma}[Expected drift]
\label{lemma:drift}
With the setup in \cref{def:process} and $0 \le t < n-s$, we have
\Tag{
	\[ \E*_{S_{t+1}}{p(\mu^{(t+1)})_i \given S_t} - p(\mu^{(t)})_i \le \frac{(1-p(\mu^{(t)})^{\max})^{-1}}{n-\bar{k}-1-t}p(\mu^{(t)})_i. \]
}
\Tag<ieeetran>{
	\begin{multline*} \E*_{S_{t+1}}{p(\mu^{(t+1)})_i \given S_t} - p(\mu^{(t)})_i \le\\ \frac{(1-p(\mu^{(t)})^{\max})^{-1}}{n-\bar{k}-1-t}p(\mu^{(t)})_i. \end{multline*}
}
\end{lemma}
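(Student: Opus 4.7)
The plan is to fix $S_t$, write $\mu' := \mu^{(t)}$ (viewed as a strongly Rayleigh distribution on $U := [n]\setminus S_t$), and abbreviate $p_a := p(\mu')_a$ and $p_{ab} := \P_{F \sim \mu'}{a,b \in F}$. Conditional on $S_t$, the element $\sigma(t+1)$ is uniform over $U \setminus (T \cup \set{i})$, a set of size $n-\bar{k}-1-t$. My first step is to write
\[ \E*_{S_{t+1}}{p(\mu^{(t+1)})_i \given S_t} = \frac{1}{n-\bar{k}-1-t}\sum_{j \in U \setminus (T \cup \set{i})} p(\mu_{U \setminus \set{j}})_i, \]
and then compute each term by Bayes' rule as $p(\mu_{U\setminus\set{j}})_i = \P_{F \sim \mu'}{i \in F \given j \notin F} = (p_i - p_{ij})/(1 - p_j).$ Subtracting $p_i$ from each term gives $(p_ip_j - p_{ij})/(1 - p_j)$, which is nonnegative by the negative correlation property of strongly Rayleigh distributions. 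I then uniformly upper bound $1/(1-p_j) \le 1/(1-p(\mu^{(t)})^{\max})$, reducing the claim to showing
\[ \sum_{j \in U \setminus (T \cup \set{i})} (p_i p_j - p_{ij}) \le p_i. \]

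The second step is to flip this sum using \cref{lemma:rowsum}. Applied to $\mu'$, it says $\sum_{j \in U} (p_i p_j - p_{ij}) = 0$, so the sum in question equals $\sum_{j \in (T \cup \set{i}) \cap U} (p_{ij} - p_i p_j)$. Since $S_t$ was constructed to be disjoint from $T \cup \set{i}$, we have $T \cup \set{i} \subseteq U$, so this sum is over all of $T \cup \set{i}$. Negative correlation gives $p_{ij} - p_ip_j \le 0$ for every $j \in T$ with $j \neq i$, so the only nonnegative contribution comes from $j = i$, where $p_{ii} = p_i$ yields $p_{ii} - p_i^2 = p_i(1-p_i) \le p_i$. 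Combining these steps produces the claimed drift bound.

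I don't anticipate a real obstacle: the entire argument is a short computation. The only conceptual step is the sign flip via \cref{lemma:rowsum}, which is exactly what makes the bound nontrivial — it trades a sum of $\Theta(n)$ small negatively-correlated terms for a sum over the small set $T \cup \set{i}$, where all but one term have a favorable sign and the remaining ``diagonal'' term is at most $p_i$. Everything else (the conditional-probability identity, the $p^{\max}$ replacement in the denominator, and the use of strong Rayleigh to preserve negative correlation under restriction) is routine.
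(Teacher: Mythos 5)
Your proof is correct and follows essentially the same route as the paper's: the Bayes'-rule identity $(p_ip_j-p_{ij})/(1-p_j)$ for each term, the uniform bound $1/(1-p_j)\le(1-p(\mu^{(t)})^{\max})^{-1}$ justified by negative correlation, and \cref{lemma:rowsum} to control the sum. The only cosmetic difference is that you flip the sum onto the complement set $T\cup\set{i}$ while the paper extends it to all of $[n]\setminus\set{i}$ (adding nonnegative terms) before applying the row-sum identity; both yield the same bound $p_i-p_i^2\le p_i$.
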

\begin{proof}
By definition, we know that if $S_{t+1} = S_t \cup \set{j}$ for some $j \in [n]\backslash(T\cup\set{i}\cup S_t),$ then
\begin{align}
    p(\mu^{(t+1)})_i - p(\mu^{(t)})_i &= \frac{\P*_{F \sim \mu^{(t)}}{i \in F, j\notin F} }{1 - p(\mu^{(t)})_j} - p(\mu^{(t)})_j \nonumber \\
    &= \frac{p(\mu^{(t)})_ip(\mu^{(t)})_j - \P*_{F \sim \mu^{(t)}}{i,j \in F}}{1 - p(\mu^{(t)})_j}. \label{eq:conditionj}
\end{align}
Hence by \cref{eq:conditionj},
\begin{align}
 &\E*_{S_{t+1}}{p(\mu^{(t+1)})_i \given S_t} - p(\mu^{(t)})_i \nonumber \\
 =~& \frac{1}{n - \card*{T\cup\set{i}\cup S_t}}\Tag<ieeetran>{\times\nonumber \\ &}\sum_{j \in [n]\backslash(T\cup\set{i}\cup S_t)} \frac{p(\mu^{(t)})_ip(\mu^{(t)})_j - \P*_{F \sim \mu^{(t)}}{i,j \in F}}{1 - p(\mu^{(t)})_j}. \label{eq:kanye}
\end{align}
Because $\mu$ and hence $\mu^{(t)}$ is strongly Rayleigh, each numerator of the fractions in \cref{eq:kanye} is nonnegative for $j \neq i$, hence the expression in \cref{eq:kanye} is at most
\begin{align*}
&\frac{(1-p(\mu^{(t)})^{\max})^{-1}}{n-\bar{k}-1-t}\Tag<ieeetran>{\times\\ &} \sum_{j \in [n]\backslash\set{i}} \parens*{p(\mu^{(t)})_ip(\mu^{(t)})_j - \P*_{F \sim \mu^{(t)}}{i,j \in F}} \\
\Tag{=~& \frac{(1-p(\mu^{(t)})^{\max})^{-1}}{n-\bar{k}-1-t} \Tag<ieeetran>{\times\\ &}\parens*{p(\mu^{(t)})_i - p(\mu^{(t)})_i^2 + \sum_{j \in [n]} \parens*{ p(\mu^{(t)})_ip(\mu^{(t)})_j - \P*_{F \sim \mu^{(t)}}{i,j \in F}}} \\}
\overset{(i)}{\le}~& \frac{(1-p(\mu^{(t)})^{\max})^{-1}}{n-\bar{k}-1-t}p(\mu^{(t)})_i,
\end{align*}
where $(i)$ follows from \cref{lemma:rowsum}. This completes the proof.
\end{proof}
Now we analyze the variance/maximum change in $p(\mu^{(t)})_i$.
\begin{lemma}[Variance and maximum change]
\label{lemma:maxchange}
With the setup in \cref{def:process} and $0 \le t < n-s$, we have with probability $1$ conditioned on $S_t$ that
\begin{align} p(\mu^{(t+1)})_i - p(\mu^{(t)})_i \le \frac{p(\mu^{(t)})^{\max}}{1 - p(\mu^{(t)})^{\max}}p(\mu^{(t)})_i. \label{eq:maxchange} \end{align}
Also, we have the variance bound
\Tag{
	\[ \E*_{S_{t+1}}{\parens*{p(\mu^{(t+1)})_i - p(\mu^{(t)})_i}^2 \given S_t} \le \frac{1}{n-\bar{k}-1-t} \cdot \frac{p(\mu^{(t)})^{\max}}{(1 - p(\mu^{(t)})^{\max})^2}p(\mu^{(t)})_i^2. \label{eq:var} \]
}
\Tag<ieeetran>{
	\begin{multline*} \E*_{S_{t+1}}{\parens*{p(\mu^{(t+1)})_i - p(\mu^{(t)})_i}^2 \given S_t} \le\\ \frac{1}{n-\bar{k}-1-t} \cdot \frac{p(\mu^{(t)})^{\max}}{(1 - p(\mu^{(t)})^{\max})^2}p(\mu^{(t)})_i^2. \label{eq:var} \end{multline*}
}
\end{lemma}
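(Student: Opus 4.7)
The plan is to start from the key identity used in the proof of \cref{lemma:drift}, namely
\[ p(\mu^{(t+1)})_i - p(\mu^{(t)})_i = \frac{p(\mu^{(t)})_ip(\mu^{(t)})_j - \P_{F \sim \mu^{(t)}}{i,j \in F}}{1 - p(\mu^{(t)})_j} \]
whenever $S_{t+1} = S_t \cup \set{j}$, and then peel off the two bounds by separate elementary arguments. For the almost-sure bound \cref{eq:maxchange}, I would upper-bound the numerator using $\P_{F \sim \mu^{(t)}}{i,j \in F} \geq 0$ and $p(\mu^{(t)})_j \leq p(\mu^{(t)})^{\max}$, and lower-bound the denominator by $1 - p(\mu^{(t)})^{\max}$. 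Both replacements are uniform in $j$, so they hold conditionally on $S_t$ with probability $1$, immediately yielding the claimed inequality.

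For the variance bound \cref{eq:var}, the plan is to use the almost-sure bound just proved to tame one of the two factors in the squared increment. Writing out the conditional second moment as in the drift computation,
\[ \E*_{S_{t+1}}{\parens*{p(\mu^{(t+1)})_i - p(\mu^{(t)})_i}^2 \given S_t} = \frac{1}{n-\bar{k}-1-t}\sum_{j}\parens*{\frac{p(\mu^{(t)})_ip(\mu^{(t)})_j - \P*_{F \sim \mu^{(t)}}{i,j \in F}}{1 - p(\mu^{(t)})_j}}^2, \]
I would apply \cref{eq:maxchange} to one of the two factors in the square. This leaves a first-order sum
\[ \sum_{j \in [n]\backslash(T\cup\set{i}\cup S_t)}\frac{p(\mu^{(t)})_ip(\mu^{(t)})_j - \P*_{F \sim \mu^{(t)}}{i,j \in F}}{1 - p(\mu^{(t)})_j}, \]
which by \cref{lemma:drift} (i.e.\ by negative correlation from strong Rayleigh-ness together with \cref{lemma:rowsum}) is at most $(1-p(\mu^{(t)})^{\max})^{-1}\, p(\mu^{(t)})_i$. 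Multiplying back by the $p(\mu^{(t)})^{\max}(1-p(\mu^{(t)})^{\max})^{-1} p(\mu^{(t)})_i$ factor from \cref{eq:maxchange} produces exactly the variance bound with the claimed factor of $p(\mu^{(t)})^{\max}/(1-p(\mu^{(t)})^{\max})^2$ on $p(\mu^{(t)})_i^2$.

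I expect no real obstacle: the content is already encapsulated in \cref{lemma:drift} and \cref{lemma:rowsum}. The only thing worth being careful about is which factor of the squared term gets replaced by the $L^\infty$ bound — one has to keep the sum in the form $\sum_j \text{(nonneg.\ covariance)}/(1-p(\mu^{(t)})_j)$ so that the row-sum identity applies and gives a clean single-factor upper bound, rather than something like $\sum_j p(\mu^{(t)})_j$ which would not telescope.
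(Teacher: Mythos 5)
Your proposal is correct and follows essentially the same route as the paper: the almost-sure bound comes from bounding the numerator of the one-step increment by $p(\mu^{(t)})_i\,p(\mu^{(t)})^{\max}$ and the denominator by $1-p(\mu^{(t)})^{\max}$, and the variance bound comes from replacing one factor of the squared (nonnegative) increment by this $L^\infty$ bound and controlling the remaining first-order sum via \cref{lemma:drift}. The paper phrases the second step as $\E[X^2]\le M\,\E[X]$ for the nonnegative increment $X$ rather than expanding the sum over $j$, but this is the identical argument.
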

\begin{proof}
By the formula in \cref{eq:conditionj} we get
\Tag{
	\[p(\mu^{(t+1)})_i - p(\mu^{(t)})_i = \frac{p(\mu^{(t)})_ip(\mu^{(t)})_j - \P*_{F \sim \mu^{(t)}}{i,j \in F}}{1 - p(\mu^{(t)})_j} \le \frac{p(\mu^{(t)})^{\max}}{1 - p(\mu^{(t)})^{\max}}p(\mu^{(t)})_i.\]
}
\Tag<ieeetran>{
	\begin{multline*}p(\mu^{(t+1)})_i - p(\mu^{(t)})_i =\\ \frac{p(\mu^{(t)})_ip(\mu^{(t)})_j - \P*_{F \sim \mu^{(t)}}{i,j \in F}}{1 - p(\mu^{(t)})_j} \le\\ \frac{p(\mu^{(t)})^{\max}}{1 - p(\mu^{(t)})^{\max}}p(\mu^{(t)})_i.\end{multline*}
}
Because $\mu$ and hence $\mu^{(t)}$ is strongly Rayleigh, $p(\mu^{(t+1)})_i \ge p(\mu^{(t)})_i$ so
\begin{align*}
&\E*_{S_{t+1}}{\parens*{p(\mu^{(t+1)})_i - p(\mu^{(t)})_i}^2 \given S_t} \\ \overset{(i)}{\le}~&\frac{p(\mu^{(t)})^{\max}}{1 - p(\mu^{(t)})^{\max}}p(\mu^{(t)})_i \E*_{S_{t+1}}{p(\mu^{(t+1)})_i - p(\mu^{(t)})_i \given S_t}
\\ \overset{(ii)}{\le}~&\frac{1}{n-\bar{k}-1-t} \cdot \frac{p(\mu^{(t)})^{\max}}{(1 - p(\mu^{(t)})^{\max})^2}p(\mu^{(t)})_i^2,
\end{align*}
where $(i)$ follows from \cref{eq:maxchange} and $(ii)$ follows from \cref{lemma:drift}.
\end{proof}
Our desired concentration bound now essentially follows from a careful application of Bernstein's inequality for martingales to the sequence $\log p(\mu^{(t)})_i.$
\begin{theorem}[{\cite[Theorem 20]{CL06}}]
\label{thm:bernstein}
Let $X^{(0)}, X^{(1)}, \dots, X^{(t)}$ be a martingale such that $X^{(u)} - X^{(u-1)} \le M$ with probability $1$ and $\Var*{X^{(u)} \given X^{(u-1)}} \le \sigma_u^2$ for $u \in [t]$. Then \[ \P*{X^{(t)} - X^{(0)} \ge \lambda} \le \exp\parens*{-\frac{\lambda^2}{2\sum_{u\in[t]}\sigma_u^2 + 2M\lambda/3} }. \]
\end{theorem}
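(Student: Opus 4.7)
The plan is to prove the stated martingale Bernstein inequality via the standard exponential-moment (Chernoff) method: bound the conditional moment generating function (MGF) of each martingale increment, combine these via the tower property, apply Markov's inequality to $e^{\theta(X^{(t)}-X^{(0)})}$, and finally choose the Chernoff parameter $\theta > 0$ optimally.

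Setting $Y_u := X^{(u)} - X^{(u-1)}$, the martingale hypothesis gives $\mathbb{E}[Y_u \mid X^{(0)},\dots,X^{(u-1)}] = 0$, and by assumption $Y_u \le M$ a.s.\ and $\mathbb{E}[Y_u^2 \mid X^{(u-1)}] \le \sigma_u^2$. The key single-step estimate I would establish is
\[
\mathbb{E}\!\left[e^{\theta Y_u} \,\middle|\, X^{(u-1)}\right] \;\le\; \exp\!\left(\frac{\theta^2 \sigma_u^2}{2(1-\theta M/3)}\right) \quad \text{for all } \theta \in [0, 3/M).
\]
To prove it, I would write $e^{\theta y} = 1 + \theta y + \tfrac{(\theta y)^2}{2}\phi(\theta y)$ with $\phi(x) := 2(e^x-1-x)/x^2$ extended by $\phi(0)=1$. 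A short calculus argument (computing $\phi'(x) = 2((x-2)e^x + x + 2)/x^3$ and checking its sign) shows $\phi$ is monotone non-decreasing on $\mathbb{R}$, so $y \le M$ together with $\theta \ge 0$ yields the pointwise inequality $\phi(\theta y) \le \phi(\theta M)$. Taking conditional expectations, using $\mathbb{E}[Y_u \mid \cdot] = 0$ and the variance bound, and then $1 + z \le e^z$, reduces the task to showing $\phi(x) \le 1/(1 - x/3)$ for $x \in [0,3)$, which follows by expanding $\phi(x) = \sum_{j\ge 0} 2 x^j/(j+2)!$ and the inequality $(j+2)! \ge 2 \cdot 3^j$.

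With the single-step bound in hand, iterating the tower property gives
\[
\mathbb{E}\!\left[e^{\theta (X^{(t)} - X^{(0)})}\right] \;\le\; \exp\!\left(\frac{\theta^2 V}{2(1 - \theta M/3)}\right), \qquad V := \sum_{u \in [t]} \sigma_u^2,
\]
and Markov's inequality produces $\Pr[X^{(t)} - X^{(0)} \ge \lambda] \le \exp(-\theta \lambda + \theta^2 V / (2(1 - \theta M/3)))$. Plugging in the explicit (and essentially optimal) choice $\theta^\star := \lambda/(V + M\lambda/3) \in [0, 3/M)$ and computing $1 - \theta^\star M/3 = 3V/(3V + M\lambda)$, the exponent telescopes cleanly to exactly $-\lambda^2/(2V + 2M\lambda/3)$, which matches the stated bound.

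The main obstacle is the single-step MGF estimate, because the hypothesis gives only the one-sided constraint $Y_u \le M$, not a two-sided bound $|Y_u| \le M$. The naive Taylor-truncation argument using $|Y_u|^k \le M^{k-2} Y_u^2$ is not legal under this weaker assumption. This is resolved precisely by the monotonicity of $\phi$ on all of $\mathbb{R}$: no matter how negative $Y_u$ may be, one has $\phi(\theta Y_u) \le \phi(\theta M)$, so after multiplying by the non-negative factor $(\theta Y_u)^2/2$ and taking expectation only the variance $\sigma_u^2$ enters. Everything else is routine algebra.
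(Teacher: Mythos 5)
The paper does not prove this statement at all—it is quoted verbatim from \cite[Theorem 20]{CL06}—so there is no internal proof to compare against. Your argument is correct and is essentially the standard Chernoff/MGF proof of the martingale Bernstein inequality (the same route as Chung--Lu and Freedman): the single-step bound $\E{e^{\theta Y_u}\mid \cdot}\le \exp\bigl(\tfrac{\theta^2\sigma_u^2}{2(1-\theta M/3)}\bigr)$ is handled correctly, and in particular you rightly identify that the one-sided bound $Y_u\le M$ forces the monotonicity-of-$\phi$ argument rather than the naive $\abs{Y_u}^j\le M^{j-2}Y_u^2$ truncation; the choice $\theta^\star=\lambda/(V+M\lambda/3)$ then gives exactly the stated exponent. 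The only cosmetic point is that the hypothesis should be read as a conditional-variance bound with respect to the full filtration generated by $X^{(0)},\dots,X^{(u-1)}$ (as in the cited source), which is what your tower-property iteration implicitly uses.
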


\begin{proof}[Proof of \cref{thm:concentration}]
For the setup in \cref{def:process}, define the random variables $Y^{(t)} := \log p(\mu^{(t)})_i$ indexed by $0 \le t \le n-s$. Given this, we define the martingale $X^{(0)} = Y^{(0)}$ and \[ X^{(t+1)} := X^{(t)} + Y^{(t+1)} - Y^{(t)} - \E*_{S_{t+1}}{Y^{(t+1)} - Y^{(t)} \given S_t}, \]
for $0 \le t < n-s$.

We will bound the drift, maximum change, and variance of $Y^{(t+1)}$ assuming that $p(\mu^{(t)})^{\max} \le 2p(\mu)^{\max} n/(n-t) < 1/10$ (which we want to show holds with high probability). We may assume this, because we can just prematurely stop the stochastic process whenever this condition breaks. For the drift term, we bound
\begin{align}
&\E*_{S_{t+1}}{Y^{(t+1)} - Y^{(t)} \given S_t} \Tag<ieeetran>{\nonumber \\} = \Tag<ieeetran>{~&} \E*_{S_{t+1}}{\log(p(\mu^{(t+1)})_i/p(\mu^{(t)})_i) \given S_t} \nonumber \\
\le~& \E*_{S_{t+1}}{\frac{p(\mu^{(t+1)})_i - p(\mu^{(t)})_i}{p(\mu^{(t)})_i} \given S_t}\Tag<ieeetran>{\nonumber \\}\overset{(i)}{\le}\Tag<ieeetran>{~&} \frac{(1-p(\mu^{(t)})^{\max})^{-1}}{n-\bar{k}-1-t} \nonumber \\
\overset{(ii)}{\le}~& \frac{1}{n-\bar{k}-1-t} + \frac{4p(\mu)^{\max} n}{(n-\bar{k}-1-t)(n-t)},
\label{eq:drifty}
\end{align}
where $(i)$ follows from \cref{lemma:drift} and $(ii)$ follows from our assumption on $p(\mu^{(t)})^{\max}$. For the bound on the maximum increase, we get
\begin{align}
    Y^{(t+1)} - Y^{(t)} &= \log\parens*{p(\mu^{(t+1)})_i/p(\mu^{(t)})_i} \Tag<ieeetran>{\nonumber \\&}\le \frac{p(\mu^{(t+1)})_i - p(\mu^{(t)})_i}{p(\mu^{(t)})_i} \nonumber \\
    &\le p(\mu^{(t+1)})_i - p(\mu^{(t)})_i \le \frac{p(\mu^{(t)})^{\max}}{1 - p(\mu^{(t)})^{\max}} \Tag<ieeetran>{\nonumber \\&}\le \frac{4p(\mu)^{\max}n}{n-t}.
    \label{eq:maxy}
\end{align}
by \cref{lemma:maxchange} \cref{eq:maxchange} and our assumption on $p(\mu^{(t)})^{\max}$. For the variance term, we first bound
\begin{align}
    &\E*_{S_{t+1}}{\parens*{Y^{(t+1)} - Y^{(t)}}^2 \given S_t} \Tag<ieeetran>{\nonumber \\}=\Tag<ieeetran>{~&} \E*_{S_{t+1}}{\log\parens*{\mu^{(t+1)})_i/p(\mu^{(t)})_i}^2 \given S_t} \nonumber \\
    \le~& \E*_{S_{t+1}}{\parens*{\frac{p(\mu^{(t+1)})_i - p(\mu^{(t)})_i}{p(\mu^{(t)})_i}}^2 \given S_t} \nonumber \\
    \le~& \frac{1}{n-\bar{k}-1-t} \cdot \frac{p(\mu^{(t)})^{\max}}{(1 - p(\mu^{(t)})^{\max})^2} \Tag<ieeetran>{\nonumber \\}\le\Tag<ieeetran>{~&} \frac{8np(\mu)^{\max}}{(n-\bar{k}-1-t)(n-t)} \label{eq:vary}
\end{align}
by \cref{lemma:maxchange} and our assumption on $p(\mu^{(t)})^{\max}$. Our next goal is to prove that $X^{(t)} \le X^{(0)} + 1/10$ with high probability by using \cref{thm:bernstein}. Because $\mu$ and hence $\mu^{(u)}$ is strongly Rayleigh for all $0 \le u \le t$,
\Tag{
	\[ X^{(u+1)} - X^{(u)} \le Y^{(u+1)} - Y^{(u)} \le \frac{4p(\mu)^{\max}n}{n-u} \le \frac{4p(\mu)^{\max}n}{s} \le \frac{1}{1000\log n} \]
}
\Tag<ieeetran>{
	\begin{multline*} X^{(u+1)} - X^{(u)} \le Y^{(u+1)} - Y^{(u)} \le \frac{4p(\mu)^{\max}n}{n-u} \le\\ \frac{4p(\mu)^{\max}n}{s} \le \frac{1}{1000\log n} \end{multline*}
}
by \cref{eq:maxy} and sufficiently large $C$ for $s \ge C(np(\mu)^{\max}+\bar{k})\log n$ so we may take $M = 1/(1000\log n)$ in \cref{thm:bernstein}. Additionally, we have that
\begin{align*}
\Var*{X^{(u+1)} \given X^{(u)}} &\le \E*_{S_{t+1}}{\parens*{Y^{(t+1)} - Y^{(t)}}^2 \given S_t} \\
&\le \frac{8np(\mu)^{\max}}{(n-\bar{k}-1-u)(n-u)} \Tag<ieeetran>{\\ &}\le \frac{8np(\mu)^{\max}}{(n-\bar{k}-1-u)^2},
\end{align*}
by \cref{eq:vary} so we may take $\sigma_u^2 =8np(\mu)^{\max}/(n-\bar{k}-1-u)^2$ in \cref{thm:bernstein}. By \cref{thm:bernstein} for $\lambda = 1/10$, $M = 1/(1000\log n)$, and $\sigma_u^2 =8np(\mu)^{\max}/(n-\bar{k}-1-u)^2$, we get that
\begin{align*}
    &\P*{X^{(t)} - X^{(0)} \ge 1/10} \le \exp\parens*{-\frac{1/100}{2\sum_{u\in[t]}\sigma_u^2 + M/15}} \\
    \le&\exp\parens*{-\frac{1/100}{16np(\mu)^{\max}\sum_{u \le n-s} \frac{1}{(n-\bar{k}-1-u)^2} + \frac{1}{15000\log n}}} \Tag<ieeetran>{\\}\le\Tag<ieeetran>{&} \exp\parens*{-\frac{1/100}{\frac{50np(\mu)^{\max}}{s-\bar{k}-1} + \frac{1}{15000\log n}}} \\
    \le& \exp(-20\log n) = n^{-20}
\end{align*}
for sufficiently large $C$ in $s \ge C(np(\mu)^{\max}+\bar{k})\log n \ge C\bar{k}\log n$. To finish, note that
\begin{align} \Tag<ieeetran>{&} Y^{(t)} - Y^{(0)} \Tag<ieeetran>{\nonumber \\} &= X^{(t)} - X^{(0)} + \sum_{u \in [t-1]} \E*_{S_{u+1}}{Y^{(u+1)} - Y^{(u)} \given S_u} \nonumber \\
&\overset{(i)}{\le} X^{(t)} - X^{(0)} + \Tag<ieeetran>{\nonumber \\ &~}\sum_{u \in [t-1]} \frac{1}{n-\bar{k}-1-u} + \frac{4p(\mu)^{\max} n}{(n-\bar{k}-1-u)(n-u)} \nonumber \\
&\overset{(ii)}{\le} X^{(t)} - X^{(0)} + \log\parens*{\frac{n-\bar{k}-1}{n-\bar{k}-1-t} } + \frac{20p(\mu)^{\max} n}{n-\bar{k}-1-t} \nonumber \\
&\overset{(iii)}{\le} X^{(t)} - X^{(0)} + \log\parens*{\frac{n}{n-t}} + \frac{1}{100\log n}, \label{eq:ytox}
\end{align}
where $(i)$ follows from \cref{eq:drifty}, $(ii)$ follows from direction calculations with Riemann integrals, and $(iii)$ follows from $t \le n-s$ and $s \ge C(np(\mu)^{\max}+\bar{k})\log n$. To conclude, we write
\begin{align*}
    &\P*{p(\mu^{(t)})_i \ge \frac{2p(\mu)_in}{n-t}} \Tag<ieeetran>{\\}=\Tag<ieeetran>{&} \P*{Y^{(t)} - Y^{(0)} \ge \log\parens*{\frac{n}{n-t}} + \log 2} \\
    \overset{(i)}{\le}& \P*{X^{(t)} - X^{(0)} \ge 1/10} \le n^{-20},
\end{align*}
where $(i)$ follows from \cref{eq:ytox}. \cref{thm:concentration} now follows from union-bounding over all times $t \in [n-s]$ and all coordinates $i \in [n].$
\end{proof}
    \section{Isotropic rounding}
\label{sec:estimation}

We give a reduction which estimates marginals of a distribution given an algorithm that samples using marginal overestimates. At a high level, we split our original strongly Rayleigh distribution $\mu \in \R^{\binom{[n]}{k}}$ into two smaller distributions (supported on $S_1, S_2$ for $[n] = S_1 \sqcup S_2$), and recursively produce marginal overestimates for $\mu_{S_1}$ and $\mu_{S_2}$ that sum to at most $4k$. Now, we merge these groups. Because $\mu$ is strongly Rayleigh, the marginal overestimates on $\mu_{S_1}$ and $\mu_{S_2}$ provide marginal overestimates for $\mu$ summing to at most $8k$. Thus, we can cheaply take $O(n\log n/k)$ samples from $\mu$ to get marginal overestimates summing to at most $4k$ again.

\begin{theorem}[Isotropic rounding from sampling]
\label{lemma:marginalestimate}
Let $\mu \in \R^{\binom{[n]}{k}}$ be a strongly Rayleigh distribution. Assume that we can sample from restrictions $\mu_S$ of $\mu$ (\cref{def:conditional}) in time $\mathcal{A}_{\mu}(K)$ given marginal overestimates of $\mu_S$ that sum to at most $K$\footnote{We do not write an $n$ dependence as it will be polylogarithmic in our algorithms.}. Then there is an algorithm that produces marginal overestimates $q_i \ge p(\mu)_i$ with sum $\sum_{i\in[n]} q_i \le 4k$ in time $\Otilde*{n/k \cdot \mathcal{A}_{\mu}(8k)}$.
\end{theorem}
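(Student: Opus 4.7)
The plan is a recursive halving of the ground set: negative correlation of strongly Rayleigh distributions lets us transfer overestimates from restrictions to the two halves back to $\mu$, and the sampling oracle $\mathcal{A}_\mu(8k)$ is then used to refresh the sum budget from $8k$ back down to $4k$ at every level.

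The recursion is on $n$. For the base case $n \le 4k$, return $q_i := 1$ for all $i$; this is a valid overestimate and sums to $n \le 4k$. For the recursive step, partition $[n] = S_1 \sqcup S_2$ into two nearly equal halves. Each restriction $\mu_{S_j}$ is again strongly Rayleigh (by \cref{prelim:sr}), so the induction hypothesis yields overestimates $q^{(j)}_i \ge p(\mu_{S_j})_i$ for $i \in S_j$ with $\sum_{i \in S_j} q^{(j)}_i \le 4k$. By the negative correlation property of strongly Rayleigh distributions in \cref{prelim:sr}, $p(\mu)_i \le p(\mu_{S_j})_i \le q^{(j)}_i$ for $i \in S_j$, so the concatenation $\tilde{q}$ is a valid overestimate for $p(\mu)$ with $\sum_i \tilde{q}_i \le 8k$. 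Degenerate restrictions with empty support can be handled by setting the corresponding $q^{(j)}$ entries to $0$.

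Next, refine $\tilde{q}$ to a vector summing to at most $4k$ by sampling. Draw $N = \Theta(n \log n / k)$ independent samples from $\mu$, each obtained by a call to $\mathcal{A}_\mu(8k)$ using $\tilde{q}$. Let $c_i$ be the empirical count of element $i$ across the $N$ samples, and set
\[
q_i := \max\set*{\gamma c_i / N, \; C \log n / N}
\]
for a modest constant $\gamma > 1$ and sufficiently large constant $C$. A multiplicative Chernoff bound combined with a union bound over $i \in [n]$ shows $q_i \ge p(\mu)_i$ for all $i$ with probability $1 - n^{-\Omega(1)}$: elements with $p(\mu)_i \ge C \log n / N$ are captured by $\gamma c_i / N$, and the rest are dominated by the floor $C \log n / N \ge p(\mu)_i$. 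The sum decomposes as $\sum_i q_i \le \gamma \sum_i c_i/N + Cn\log n/N = \gamma k + O(k)$, and the constants $\gamma$, $C$, and the hidden constant in $N$ can be tuned to make this at most $4k$.

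The work of the refinement step is $N \cdot \mathcal{A}_\mu(8k) = \Otilde{n/k} \cdot \mathcal{A}_\mu(8k)$, giving the recurrence $W(n) = 2W(n/2) + \Otilde{n/k} \cdot \mathcal{A}_\mu(8k)$; summing geometrically over the $O(\log(n/k))$ levels yields $W(n) = \Otilde{n/k} \cdot \mathcal{A}_\mu(8k)$, as claimed. The main delicate point is the calibration of the refinement: the floor $C \log n / N$ must be $\Theta(k/n)$ so that the low-marginal elements contribute only $O(k)$ in total, which forces $N = \Omega(n \log n / k)$; this is exactly the regime where the per-element empirical noise is controllable \emph{and} the aggregated floor fits within the $4k$ budget.
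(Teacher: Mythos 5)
Your proposal is correct and follows essentially the same route as the paper's proof: recursive halving with the base case $n\le 4k$, transferring the halves' overestimates to $\mu$ via negative correlation to get a budget of $8k$, and then refreshing to $\le 4k$ by drawing $\Theta(n\log n/k)$ samples and taking the maximum of a scaled empirical frequency and a floor of $\Theta(k/n)$, justified by a Chernoff bound plus union bound. The only cosmetic quibble is that the recurrence is not really "geometric" — each level costs $\Otilde{n/k}\cdot\mathcal{A}_\mu(8k)$ and there are $O(\log(n/k))$ levels, with the extra logarithm absorbed into $\Otilde{\cdot}$, exactly as in the paper.
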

\begin{proof}
We use a divide-and-conquer procedure. Precisely, given a set $S$ we use the following algorithm to produce marginals overestimates of $S$ summing to at most $4k$. If $\card{S} \le 4k$, then we let all our overestimates be $1$. Otherwise, partition $S = S_1 \sqcup S_2$ into equally sized pieces, and recursively produce marginal overestimates $q_i^{(1)} \ge p(\mu_{S_1})_i$ and $q_i^{(2)} \ge p(\mu_{S_2})_i$ with $\sum_{i\in S_1} q_i^{(1)} \le 4k$ and $\sum_{i\in S_2} q_i^{(2)} \le 4k$.

Because $\mu$ is strongly Rayleigh, in fact $q_i^{(1)} \ge p(\mu_S)_i$ for all $i \in S_1$ and $q_i^{(2)} \ge p(\mu_S)_i$ for all $i \in S_2.$ Hence the vector $\bar{q} \in \R^S$ defined as $\bar{q}_i = q_i^{(1)}$ for $i \in S_1$ and $\bar{q}_i = q_i^{(2)}$ for $i \in S_2$ are marginal overestimates for $\mu_S$. Additionally, $\sum_{i\in S} \bar{q}_i \le 8k.$

Set $s = \frac{100\card{S} \log n}{k},$ and let $F_1, F_2, \dots, F_s$ be independent samples from $\mu_S$ generated in total time $\Otilde{s \cdot \mathcal{A}_{\mu}(8k)},$ by using the overestimates $\bar{q}.$ Define for $i \in S$
\[ q_i = \max\set*{\frac{k}{\card{S}}, \frac{2\card*{\set{s' \in [s] \given i \in F_{s'}}}}{s}}. \]
We claim that $q_i$ are marginal overestimates for $\mu_S$ with high probability and sum to at most $4k$. The sum follows because
\Tag{
	\[ \sum_{i\in S} q_i \le \sum_{i\in S} \parens*{\frac{k}{\card{S}} + \frac{2\card*{\set{s' \in [s] \given i \in F_{s'}}}}{s} } \le k + \sum_{s' \in [s]} \frac{2\card{F_{s'}}}{s} = 3k. \]
}
\Tag<ieeetran>{
	\begin{multline*} \sum_{i\in S} q_i \le \sum_{i\in S} \parens*{\frac{k}{\card{S}} + \frac{2\card*{\set{s' \in [s] \given i \in F_{s'}}}}{s} } \le\\  k + \sum_{s' \in [s]} \frac{2\card{F_{s'}}}{s} = 3k. \end{multline*}
}
Now we show that $q_i$ are marginal overestimates of $\mu_S$. The case $p(\mu_S)_i \le k/\card{S}$ is trivial. Otherwise, by a Chernoff bound,
\Tag{
	\[ \P*_{F_1,\dots,F_s}{\card*{\set{s' \in [s] \given i \in F_{s'}}} \le p(\mu_S)_is/2 } \le \exp(-p(\mu_S)_is/8) \le n^{-100} \]
}
\Tag<ieeetran>{
	\begin{multline*} \P*_{F_1,\dots,F_s}{\card*{\set{s' \in [s] \given i \in F_{s'}}} \le p(\mu_S)_is/2 } \le\\ \exp(-p(\mu_S)_is/8) \le n^{-100} \end{multline*}
}
by the choice $s = \frac{100\card{S}\log n}{k}$ and $p(\mu_S)_i \ge k/\card{S}.$

The final runtime claim follows from recursively calling the above describde algorithm on $\mu$, and using that there are $O(\log n)$ layers, each with total size $n$. Precisely, the total number of samples taken in a layer is at most the sum over $s = \frac{100\card{S}\log n}{k}$ in a layer, which is $O(n/k \cdot \log n).$
\end{proof}
    \section{Proofs of main results}
\label{sec:mainproof}

In this section, we combine our previous results to show \cref{thm:overestimate,thm:updown,cor:spanning,cor:dpp}. All results will follow from our entropy contraction bound \cref{thm:main technical} combined with \cref{lem:entropy-contraction-implies-mlsi}. Also, \cref{thm:updown} requires \cref{lemma:marginalestimate}.

\begin{proof}[Proof of \cref{thm:overestimate}]
We first apply \cref{prop:near-isotropic} to instead focus on sampling from a strongly Rayleigh distribution $\mu'$ with all marginals bounded by $K/n$. Let $\bar{\mu'}$ be the complement of $\mu'$.

For $\kappa_1$ as in \cref{thm:main technical} we run $O(\kappa_1^{-1}\log n)$ steps of a down-up operator on the complement of our set, to converge to $\bar{\mu'}$, i.e., iterate the Markov chain $D_{(n-k)\to(n-k'+1)}U_{(n-k'+1)\to(n-k)}$. Note that each $U_{(n-k'+1)\to(n-k)}$ part needs to be implemented via a baseline sampling algorithm which takes time $\mathcal{T}(k'-1,k)$. By \cref{thm:main technical,lem:entropy-contraction-implies-mlsi}, and the fact that the up step $U_{(n-k'+1)\to(n-k)}$ cannot increase entropy, the chain mixes in $O(\kappa_1^{-1}\log n) = O(\log^3 n)$ steps. Thus the runtime is bounded by making $O(\log^3 n)$ calls to $\mathcal{T}_{\mu}(O(K), k)$ as $k' = \Theta(np(\mu')^{\max}) = O(K)$ from \cref{thm:main technical}.
\end{proof}

\begin{proof}[Proof of \cref{cor:spanning}]
Let $n = \card{E}, k = \card{V}$. By the results of \cite{ALOVV21} a spanning tree can be sampled in $\Otilde{\card{E}}$ time on a graph with edge set $E$. Hence applying \cref{thm:overestimate} with $K = O(k) = \Otilde{\card{V}}$ shows that after obtaining the initial overestimates, each future sample requires $\Otilde{\card{V}}$ time. Obtaining the original overestimates takes time $\Otilde{n/k \cdot k} = \Otilde{\card{E}}$ by \cref{lemma:marginalestimate}.
\end{proof}

\begin{proof}[Proof of \cref{cor:dpp}]
Let $\mu$ be the $k$-DPP with ensemble matrix $L.$
Sampling from a $k$-DPP over ground set of size $n$ can be done in $\Otilde{n^\omega}$ time, see \cref{lem:dpp-matmult}. This together with \cref{lemma:marginalestimate} shows that in $\tilde{O}(n/k\cdot k^\omega) = \Otilde{nk^{\omega-1}}$ time, we can get marginal overestimates that sum to $\tilde{O}(k),$ as well as one initial sample $S_0$ from $\mu.$  We now apply \cref{thm:overestimate}, and note that each oracle call is equivalent to sampling from a $k$-DPP on a size-$O(K)$ subset of $[n]$, which takes $\tilde{O}(k^\omega)$ time.
\end{proof}

We now state a result on sampling strongly Rayleigh distributions using up-down steps. While it is known that this Markov chain normally mixes in $\Otilde{n}$ steps \cite{CGM19}, we show that under isotropy of $\mu$ it mixes in $\Otilde{k}$ steps.
\begin{theorem}
\label{thm:updown}
Given a strongly Rayleigh distribution $\mu \in \R^{\binom{[n]}{k}}$ and marginal overestimates $q_i \ge \P_{T \sim \mu}{i \in T}$ for $i \in [n]$ which sum to $K := \sum_{i\in[n]} q_i,$ there is an algorithm that samples from a distribution with total variation distance $n^{-O(1)}$ from $\mu$ in time bounded by $\Otilde{K}$ calls to $\mathcal{T}_{\mu}(k+1, k)$. Additionally, given a strongly Rayleigh distribution $\mu \in \R^{\binom{[n]}{k}}$, we can produce marginal overestimates $q_i \ge \P_{T \sim \mu}{i \in T}$ with sum $\sum_{i\in [n]} q_i \le O(k)$ in time $\Otilde{n \cdot \mathcal{T}_{\mu}(k+1, k)}$.
\end{theorem}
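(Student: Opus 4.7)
The plan is to reduce both claims of \cref{thm:updown} to the entropy contraction bound already established in \cref{thm:main technical}, by instantiating it at the ``one-step'' level $\bar k = k+2$ so that the resulting walk on the complement corresponds exactly to the standard up-down walk on $\mu$. The outline closely parallels the proof of \cref{thm:overestimate}, but uses the second (rather than first) contraction inequality in \cref{thm:main technical}.

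First I would use \cref{prop:near-isotropic} to replace $\mu$ by its isotropic transformation $\mu'$ on a ground set $U$ with $\card{U}\le 2n$ and $p(\mu')^{\max}\le 2K/\card{U}$; this keeps $\mu'$ strongly Rayleigh and yields $\card{U}\cdot p(\mu')^{\max} = O(K)$. Sampling from $\mu$ reduces to sampling from $\mu'$ followed by collapsing the copies, so it suffices to bound the mixing time of the up-down walk on $\mu'$.

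Next, I would apply the second conclusion of \cref{thm:main technical} to $\bar{\mu'}$ to obtain entropy contraction for $D_{(\card{U}-k)\to(\card{U}-k-1)}$ with rate
\[ \kappa_2^{-1} = O\!\bigl(\card{U}\,p(\mu')^{\max}\log^2 n\bigr) = O(K\log^2 n). \]
Composing with the up operator $U_{(\card{U}-k-1)\to(\card{U}-k)}$, which cannot increase KL divergence, and invoking \cref{lem:entropy-contraction-implies-mlsi}, the down-up chain on $\bar{\mu'}$ at level $\card{U}-k-1$ mixes in $\widetilde{O}(K)$ steps. This chain is precisely the standard up-down walk on $\mu'$ at level $k+1$: each step picks a uniformly random element outside the current set $S$, forms a $(k+1)$-superset $T$, then samples a $k$-subset of $T$ with probability proportional to $\mu'$. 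The resampling step is exactly one call to $\mathcal{T}_{\mu}(k+1,k)$, establishing the first claim.

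For the second claim I would plug this sampling routine into \cref{lemma:marginalestimate}. With $\mathcal{A}_{\mu}(K) = \widetilde{O}(K)\cdot \mathcal{T}_{\mu}(k+1,k)$ and $K = 8k$, the recursive marginal-estimation procedure runs in time
\[ \widetilde{O}\!\bigl(n/k \cdot \mathcal{A}_{\mu}(8k)\bigr) = \widetilde{O}\!\bigl(n/k \cdot k \cdot \mathcal{T}_{\mu}(k+1,k)\bigr) = \widetilde{O}\!\bigl(n\cdot \mathcal{T}_{\mu}(k+1,k)\bigr), \]
which matches the stated bound. Since \cref{thm:main technical} and \cref{lemma:marginalestimate} already carry the analytical weight (boosted entropic independence, concentration of marginals, and the divide-and-conquer bootstrapping), the only non-routine point is verifying that the complement-side down-up walk at the correct level is the familiar up-down walk on $\mu'$ and that it can indeed be implemented by one $\mathcal{T}_{\mu}(k+1,k)$ oracle call per step; this is the main, though minor, obstacle.
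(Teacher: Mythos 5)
Your proposal is correct and follows essentially the same route as the paper: apply the isotropic transformation, invoke the second contraction bound ($\kappa_2$) of \cref{thm:main technical} together with \cref{lem:entropy-contraction-implies-mlsi} to get $\Otilde{K}$ mixing of the one-level down-up walk on the complement, and feed the resulting sampler into \cref{lemma:marginalestimate} for the marginal overestimates. The only difference is that you spell out the reduction via \cref{prop:near-isotropic} explicitly, which the paper's own proof leaves implicit.
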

\begin{proof}[Proof of \cref{thm:updown}]
For $\kappa_2$ as in \cref{thm:main technical} we run $O(\kappa_2^{-1}\log n)$ steps of the one level down-up operator on the complement of our set, to converge to $\bar{\mu'}$, i.e., iterate the Markov chain $D_{(n-k)\to(n-k-1)}U_{(n-k-1)\to(n-k)}.$ Note that each $U_{(n-k+1)\to(n-k)}$ part needs to be implemented via a baseline sampling algorithm which takes time $\mathcal{T}(k+1,k)$. By \cref{thm:main technical,lem:entropy-contraction-implies-mlsi}, and the fact that the up step $U_{(n-k-1)\to(n-k)}$ cannot increase entropy, the chain mixes in $O(\kappa_2^{-1}\log n) = O(K \log^3 n)$ steps. Thus the runtime is bounded by making $O(K \log^3 n)$ calls to $\mathcal{T}_{\mu}(k+1, k).$

Finally, by \cref{lemma:marginalestimate}, we can obtain the desired overestimates $q_i$ in time \[ \Otilde{n/k \cdot 8k \cdot \mathcal{T}_{\mu}(k+1, k)} = \Otilde{n \cdot \mathcal{T}_{\mu}(k+1, k)} \] as desired.
\end{proof}

    \Tag{\section{Deferred proofs} \label{sec:defer}

\begin{proof}[Proof of \cref{thm:entropy contraction}]
	Let $\nu$ be an arbitrary distribution. Let $f(x):=x\log x$ and note that 
	\[\DKL{\nu\river \mu}=\E_{S\sim \mu}{f(\nu(S)/\mu(S))}-f(\E_{S\sim \mu}{\nu(S)/\mu(S)}).\]

	Consider the following process: We sample a set $S\sim \mu$ and uniformly at random permute its elements to obtain $X_1,\dots,X_k$. Define the random variable
	\Tag{
		\[ \tau_i=f\parens*{\E*{\frac{\nu(S)}{\mu(S)}\given X_1,\dots,X_i}}=f\parens*{\frac{\sum_{S'\ni X_1,\dots, X_i}\nu(S')}{\sum_{S'\ni X_1,\dots,X_i}\mu(S')}}=f\parens*{\frac{\nu D_{k\to i}(\set{X_1,\dots,X_i})}{\mu D_{k\to i}(\set{X_1,\dots,X_i})}}, \]
	}
	\Tag<ieeetran>{
		\begin{multline*} \tau_i=f\parens*{\E*{\frac{\nu(S)}{\mu(S)}\given X_1,\dots,X_i}}=\\ f\parens*{\frac{\nu D_{k\to i}(\set{X_1,\dots,X_i})}{\mu D_{k\to i}(\set{X_1,\dots,X_i})}}, \end{multline*}
	}
	Note that $\tau_i$ is a ``function'' of $X_1,\dots,X_i$. It is not hard to see that 
	\[\DKL{\nu\river \mu}=\E{\tau_k}-\E{\tau_0}=\sum_{i=0}^{k-1}\E{\tau_{i+1}-\tau_i}. \]
	Conveniently, we obtain $\DKL{\nu D_{k\to \l}\river \mu D_{k\to\l}}$ by just summing over the first $\l$ terms:
	\[ \DKL{\nu D_{k\to \l}\river \mu D_{k\to \l}}=\E{\tau_\l}-\E{\tau_0}=\sum_{i=0}^{\l-1}\E{\tau_{i+1}-\tau_i}. \]
	Our goal is to show that the sum of the last $k-\l$ terms are at least $\kappa$ times the entire sum. Applying the assumption of local contraction to the link of the set $T=\set{X_1,\dots,X_i}$, we get
	\[ \E{\tau_{i+1}-\tau_i\given X_1,\dots,X_i}\leq (1-\rho(T))\cdot\E{\tau_k-\tau_i\given X_1,\dots,X_i}, \]
	which we rewrite as
	\[ \E{\tau_k-\tau_{i+1}\given X_1,\dots,X_i}\geq \rho(T)\cdot \E{\tau_k-\tau_i\given X_1,\dots, X_i}. \]
	Define the random variable
	\[ Z_i:=\frac{\tau_k-\tau_i}{\rho(\emptyset)\rho(\set{X_1})\cdots \rho(\set{X_1,\dots,X_{i-1}})}, \]
	and note that our previous inequality simplifies to $\E{Z_{i+1}\given X_1,\dots,X_i}\geq \E{Z_i\given X_1,\dots, X_i}$. Chaining these inequalities we get that $\E{Z_\l}\geq \E{Z_0}=\DKL{\nu\river \mu}$. We can further simplify $\E{Z_\l}$ by noting that the numerator $\tau_k-\tau_\l$ remains the same if we permute $X_1,\dots,X_\l$.
	\Tag{
		\[ \E{Z_\l\given \set{X_1,\dots,X_\l}, \set{X_{\l+1},\dots,X_k}}= \gamma(\set{X_1,\dots,X_\l})\cdot (\tau_k(\set{X_1,\dots,X_k})-\tau_\l(\set{X_1,\dots,X_\l})). \]
	}
	\Tag<ieeetran>{
		\begin{multline*} \E{Z_\l\given \set{X_1,\dots,X_\l}, \set{X_{\l+1},\dots,X_k}}=\\ \gamma(\set{X_1,\dots,X_\l})\cdot (\tau_k(\set{X_1,\dots,X_k})-\tau_\l(\set{X_1,\dots,X_\l})). \end{multline*}
	}
	Taking a further expectation, we get
	\[ \E{Z_\l}\leq \min\set*{\gamma(T)\given T\in \binom{[n]}{\l}}\cdot \E{\tau_k-\tau_\l}. \]
	This together with $\E{Z_\l}\geq \E{Z_0}=\E{\tau_k-\tau_0}$ completes the proof.
\end{proof}

\begin{proof}[Proof of \cref{prop:near-isotropic}]
Clearly, $\P_{S \sim \mu'}{i^{(j)} \in S} \le p_i/t_i \le K/n$.
Also, \[ \card{U} = \sum_{i\in[n]} t_i \le \sum_{i\in[n]} \parens*{1 + \frac{n}{K}p_i} \le n+n\cdot \frac{\sum_i p_i}{K}\leq 2n. \]

For the third property, if $\mu$ has the generating polynomial $g_{\mu}(z_1, \dots, z_n),$ then the distribution $\mu'$ obtained by subdividing element $i$ into $t_i$ copies has generating polynomial
\Tag{
	\[g_{\mu'}(z_1^{(1)}, \ldots, z_n^{(t_n)}) = g_\mu \parens*{\frac{z_1^{(1)} + \ldots + z_1^{(t_1)}}{t_1}, \ldots, \frac{z_n^{(1)} + \ldots + z_n^{(t_n)}}{t_n} }.\]
}
\Tag<ieeetran>{
	\begin{multline*}g_{\mu'}(z_1^{(1)}, \ldots, z_n^{(t_n)}) =\\ g_\mu \parens*{\frac{z_1^{(1)} + \ldots + z_1^{(t_1)}}{t_1}, \ldots, \frac{z_n^{(1)} + \ldots + z_n^{(t_n)}}{t_n} }.\end{multline*}
}
Clearly, if $g_{\mu}$ is real-stable then so is $g_{\mu'}$. This is because if $z_i^{j}$ are chosen from the upper half plane $\set{z\in \C\given \Im(z)>0}$, their averages also lie in the upper half plane.
\end{proof}

For the convenience of the reader, we prove \cref{lem:entropy-contraction-implies-mlsi}. The proof is standard and closely related to the standard proofs bounding the mixing time of a Markov chain based on the modified log-Sobolev constant \cite{BT06}. It is a well-known fact that entropy contraction by a factor $1-\alpha$ implies an MLSI inequality of the same magnitude.

\begin{lemma}\label{lem:dpp-matmult}
    Given an $n\times n$ positive semidefinite matrix $L$ and an integer $k\leq n$, we can sample from the $k$-DPP defined by $L$ in time $\Otilde{n^\omega}$.
\end{lemma}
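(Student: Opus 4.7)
The plan is to execute the classical eigendecomposition-based algorithm for sampling $k$-DPPs (Kulesza--Taskar, building on Hough--Krishnapur--Peres--Vir\'ag), with each of its three steps implemented via fast matrix multiplication.

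Step one is to compute the eigendecomposition $L=\sum_{i=1}^n \lambda_i u_i u_i^T$ in $\Otilde{n^\omega}$ arithmetic operations using a standard fast-matmult-based symmetric eigensolver, to $\poly\log n$ bits of precision---sufficient to introduce only $n^{-O(1)}$ total variation error in the final sample. Step two is to sample an index set $J\subseteq[n]$ with $\card{J}=k$ and $\P{J}\propto\prod_{j\in J}\lambda_j$: precompute the table of elementary symmetric polynomials $e_j^{(i)}$ for $0\le j\le k$, $0\le i\le n$, via the $O(nk)$ dynamic program $e_j^{(i)}=e_j^{(i-1)}+\lambda_i e_{j-1}^{(i-1)}$, and then scan $i=n,\dots,1$, including each $i$ in $J$ with probability $\lambda_i e_{k'-1}^{(i-1)}/e_{k'}^{(i)}$, where $k'$ is the number of elements still to be chosen. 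Cost $O(nk)$.

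Step three samples the projection DPP with kernel $P=VV^T$, where $V\in\R^{n\times k}$ is the matrix whose columns are $\set{u_j\given j\in J}$, via the HKPV scheme: at iteration $t$, draw $i_t$ with probability $\|V^{(t)}_{i_t,:}\|^2/(k-t+1)$ and then replace $V^{(t)}$ by an orthonormal basis for $\mathrm{colspan}(V^{(t)})\cap\mathrm{span}(V^{(t)}_{i_t,:})^{\perp}$ via a Householder reflection. The inner loop is structurally Gaussian elimination with randomized pivoting on the $n\times k$ matrix $V$; batching the $k$ rank-one updates into blocks and applying rectangular fast matrix multiplication to each block yields total cost $\Otilde{nk^{\omega-1}}$, exactly as in fast LU factorization of an $n\times k$ matrix.

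Summing the three contributions gives $\Otilde{n^\omega+nk+nk^{\omega-1}}=\Otilde{n^\omega}$, since $k\le n$. The main obstacle is step three: the naive HKPV implementation would cost $O(nk^2)$, which overshoots $\Otilde{n^\omega}$ once $k$ is close to $n$; the block-update viewpoint---standard for deriving fast LU from fast matrix multiplication---is what produces the required $\Otilde{nk^{\omega-1}}$ bound and closes the gap.
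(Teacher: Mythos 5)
Your steps one and two match the paper's proof (eigendecomposition via fast matrix multiplication, then the elementary-symmetric-polynomial scan to pick the $k$ eigenvectors), but step three has a genuine gap. The analogy to blocked LU with partial pivoting breaks down at the pivot-selection stage: in LU, choosing the pivot for column $j$ only requires the updated entries of that one column, which can be reconstructed from the original data and the previously computed factors by a single (batchable) product. In the HKPV scheme, the index $i_t$ is drawn from a distribution over \emph{all} $n$ rows proportional to the updated row norms $\lVert V^{(t)}_{i,:}\rVert^2$, i.e., the entire diagonal of the updated kernel. If you delay the rank-one updates to batch them, the sampling distribution you hold is stale, and refreshing all $n$ row norms after every single pivot costs $O(nk)$ per step, i.e., $O(nk^2)$ total --- which exceeds $\Otilde{n^\omega}$ when $k$ is close to $n$, exactly the regime where you need the speedup. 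Making the blocked version work requires an additional idea you have not supplied, e.g., rejection sampling against the stale block-boundary norms (with acceptance ratio controlled by shrinking the block size as $k-t$ decreases) together with a blocked maintenance of the selected rows' Householder directions; ``exactly as in fast LU'' does not cover this.

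The paper avoids the issue entirely by not using random pivoting: it forms the relevant kernels and runs Poulson's tree-based factorization sampler, which processes the ground set in a \emph{fixed} order, deciding each element from a conditional marginal obtained by Schur complements of blocks that are determined in advance. Those Schur complements are plain matrix products, so blocking is immediate and the total cost telescopes to $\Otilde{n^\omega}$ over the levels of the tree. Since the lemma only claims $\Otilde{n^\omega}$ (the eigendecomposition already costs that much), you do not need the sharper $\Otilde{nk^{\omega-1}}$ bound for the projection-DPP phase: forming $P = VV^\top$ and running the fixed-order blocked sampler within budget closes the gap with no new ideas.
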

We remark that variants of this statement where slow (but more practical) matrix multiplication algorithms are used, which result in cubic $\Otilde{n^3}$ runtimes, already exist in the literature. Here, we simply formalize the observation that these algorithms can be adapted to take advantage of fast matrix multiplication and thus the runtime can be reduced to $\Otilde{n^\omega}$.\footnote{See p.18 in \url{https://buildmedia.readthedocs.org/media/pdf/dppy/latest/dppy.pdf} for details on various algorithms for sampling from DPPs.}
\begin{proof}
    \Textcite{KT12} reduce the task of sampling from a $k$-DPP defined by $L$ to sampling from a (size-unconstrained) DPP. This is achieved by performing a spectral decomposition of the kernel matrix, choosing a subset of exactly $k$ eigenvectors, each subset chosen with probability proportional to the product of the corresponding eigenvalues and forming a new kernel matrix just from the chosen eigenvectors. For details, see \cite{KT12}. We simply remark that an approximate spectral decomposition of $L$ is the most expensive operation here (while choosing the subset of eigenvectors can be done in $O(n^2)$ time). Thus, this part of the algorithm takes time $\Otilde{n^\omega}$ using fast matrix multiplication \cite{LY93,BVKS20}.
    
    Now, for sampling from a (size-unconstrained) DPP, \textcite{KT12} presented a somewhat slow $O(n^4)$-time algorithm, which was subsequently refined to $O(n^3)$, see, e.g., \cite{Poul20}. The same algorithm can be improved by switching linear algebraic operations it uses to those that employ fast matrix multiplication. The factorization-based algorithm presented by \textcite{Poul20} arranges the ground set of $n$ elements as leaves of a balanced binary tree, where the final sample from the DPP is produced at the root of the tree. Each node of this binary tree with $m$ leaves in its subtree is associated with an $m\times m$ kernel matrix. Roughly speaking, a node with $m$ leaves first computes an $m/2\times m/2$ submatrix for its left child (the marginal of its DPP on the first half of the elements), produces a sample from the left subtree, and then produces another $m/2\times m/2$ submatrix for its right child (the conditional DPP, conditioned on choices made by the first child). These submatrices are produced simply by Schur complements and matrix multiplication, all of which take time $\Otilde{m^\omega}$ using fast matrix multiplication. Summing over all levels of the binary tree results in an overall runtime of $\Otilde{n^\omega}$. 
\end{proof}}
	\PrintBibliography
\end{document}